\newif\iffull\fulltrue
\renewcommand\footnotetextcopyrightpermission[1]{} 
\newcommand{\myparagraph}[1]{\vspace{2pt}\noindent {\bf #1.}}
\mathchardef\mhyphen="2D
\newcommand{\whp}[1]{w.h.p.}
\newcommand{\defn}[1]{{\bf{\emph{#1}}}}
\newcommand{\BigO}[1]{\ensuremath{O(#1)}}
\newcommand{\BigOmega}[1]{\ensuremath{\Omega(#1)}}
\newcommand{\slfrac}[2]{\left.#1\middle/#2\right.}
\newcommand{\newreptheorem}[2]{\newtheorem*{rep@#1}{\rep@title}\newenvironment{rep#1}[1]{\def\rep@title{#2 \ref*{##1}}\begin{rep@#1}}{\end{rep@#1}}}
\algnewcommand\algorithmicparfor{\textbf{parfor}}
\algnewcommand\algorithmicpardo{\textbf{do}}
\algnewcommand\algorithmicendparfor{}
\definecolor{ao}{rgb}{0.0, 0.5, 0.0}
\newcommand{\algname}[1]{\textnormal{\textsc{#1}}}
\newcommand{\ournd}{\algname{arb-nucleus}\xspace}
\newcommand{\ourcountrec}{\algname{rec-list-cliques}\xspace}
\newcommand{\sariyuce}{Sariy{\"{u}}ce\xspace}
\newcommand{\cnucleus}[3]{$#1$-$(#2, #3)$ nucleus} 
\newcommand{\nucleusdecomp}[2]{$(#1, #2)$ nucleus decomposition}
\newcommand{\ourndh}{\algname{arb-nucleus-hierarchy}\xspace}
\newcommand{\ourapprox}{\algname{arb-approx-nucleus-hierarchy}\xspace}
\newcommand{\ourapproxnd}{\algname{approx-arb-nucleus}\xspace} 
\newcommand{\ourlink}{\algname{link}\xspace}
\newcommand{\ourilink}{\algname{link-basic}\xspace}
\newcommand{\ourelink}{\algname{link-efficient}\xspace}
\newcommand{\ourtree}{\algname{construct-tree}\xspace}
\newcommand{\ouritree}{\algname{construct-tree-basic}\xspace}
\newcommand{\ouretree}{\algname{construct-tree-efficient}\xspace}
\newcommand{\cas}{\algname{compare-and-swap}\xspace}
\newcommand{\ourframework}{\algname{arb-nucleus-decomp-hierarchy-framework}\xspace}
\newcommand{\ourte}{\algname{anh-te}\xspace}
\newcommand{\ourel}{\algname{anh-el}\xspace}
\newcommand{\ourbl}{\algname{anh-bl}\xspace}
\newcommand{\ourapproxte}{\algname{approx-anh-te}\xspace}
\newcommand{\ourapproxel}{\algname{approx-anh-el}\xspace}
\newcommand{\ourapproxbl}{\algname{approx-anh-bl}\xspace}
\definecolor{change}{rgb}{0,0,0}
\renewenvironment{proof}[1][\proofname]{\par
  \vspace{-0.5\topsep}%
  \pushQED{\qed}%
  \normalfont
  \topsep0pt \partopsep0pt %
  \trivlist
  \item[\hskip\labelsep
        \scshape
    #1\@addpunct{.}]\ignorespaces
}{%
  \popQED\endtrivlist\@endpefalse
  \addvspace{4pt} %
}
\definecolor{lightred}{RGB}{252, 172, 167}
\definecolor{lightblue}{RGB}{207, 233, 252}
\definecolor{lightgreen}{RGB}{176, 252, 167}
\newcommand{\algcolor}[2]{%
  \hskip-\ALG@thistlm\colorbox{#1}{\parbox{\dimexpr\linewidth-4\fboxsep}{\hskip\ALG@thistlm\relax #2}}%
}
\newcommand{\algblue}[1]{\colorbox{lightblue}{#1}}
\begin{document}
\title[
Parallel Algorithms for Hierarchical Nucleus Decomposition] 
{Parallel Algorithms for Hierarchical Nucleus Decomposition}

\author{Jessica Shi}
\affiliation{\institution{MIT CSAIL} \country{USA}}
\email{jeshi@mit.edu}
\author{Laxman Dhulipala}
\affiliation{\institution{University of Maryland, College Park} \country{USA}}
\email{laxman@umd.edu}
\author{Julian Shun}
\affiliation{\institution{MIT CSAIL} \country{USA}}
\email{jshun@mit.edu}

\begin{abstract}
Nucleus decompositions have been shown to be a useful tool for finding dense subgraphs. The coreness value of a clique represents its density based on the number of other cliques it is adjacent to.
One useful output of nucleus decomposition is to generate a hierarchy among dense subgraphs at different resolutions. 
However, existing parallel algorithms for nucleus decomposition
do not generate this hierarchy, and only compute the coreness values. This paper presents a scalable parallel algorithm for hierarchy construction, with practical optimizations, 
such as interleaving the coreness computation with hierarchy construction and using a concurrent union-find data structure in an innovative way to generate the hierarchy.
We also introduce a parallel approximation algorithm for nucleus decomposition, which achieves much lower span in theory and better performance in practice. 
We prove strong theoretical bounds on the work and span (parallel time) of our algorithms.

On a 30-core machine with two-way hyper-threading,
our parallel hierarchy construction algorithm achieves up to a 58.84x speedup over the 
state-of-the-art sequential hierarchy construction algorithm by \sariyuce \text{et al.} and up to a 30.96x self-relative parallel speedup. 
On the same machine, our approximation algorithm achieves a 3.3x speedup over our exact algorithm, while generating coreness estimates with a multiplicative error of 1.33x on average.

\end{abstract}

\maketitle

\iffull
\pagestyle{plain} 
\else
\fi

\section{Introduction}
Dense subgraph and substructure detection is a fundamental tool in graph mining,
with many applications in areas including social network analysis~\cite{Ye2011,Venica2021}, fraud detection~\cite{gibson2005discovering}, and computational biology~\cite{bader2003automated,fratkin2006motifcut}.
\sariyuce{} \textit{et al.}~\cite{Sariyuce2017} introduced the 
{\em nucleus decomposition problem},  a generalization of the $k$-core~\cite{seidman83network, matula83smallest} and $k$-truss~\cite{cohen2008trusses} problems which better captures higher-order structures in graphs.
In this problem, a 
$c$-$(r,s)$ nucleus is defined to be the maximal induced subgraph such that every
$r$-clique in the subgraph is contained in at least $c$ $s$-cliques.
The goal of the \nucleusdecomp{r}{s} problem is to (1) identify for each $r$-clique in the graph, the
largest $c$ such that it is in a $c$-$(r,s)$ nucleus (known as the coreness value) and 
(2) generate a {\em nucleus hierarchy} over the nuclei, where for $c' < c$
a $c'$-$(r,s)$ nucleus $A$ is a descendant of a 
$c$-$(r,s)$ nucleus $B$ if $A$ is a subgraph of $B$.
Figure~\ref{fig:graph} shows the hierarchy for $(1,3)$ nucleus.
The nucleus hierarchy is an unsupervised method for revealing dense substructures at different resolutions in the graph.
Since the hierarchy is a tree, it is easy to visualize and explore as part of structural graph analysis tasks~\cite{SaPi16}.

\sariyuce{} and Pinar present the first algorithm for solving the nucleus decomposition problem~\cite{SaPi16}. 
However, their algorithm is sequential, and in order to scale to the large graph sizes of today, it is important to design parallel algorithms that take advantage of modern parallel hardware. 
The existing parallel algorithms for nucleus decomposition include those by \sariyuce{} \textit{et al.}~\cite{sariyuce2017parallel} and by Shi \textit{et al.}~\cite{shi2021nucleus}, but these algorithms only compute the coreness values. Importantly, they do not generate the hierarchy, which limits their applicability and which is non-trivial to generate in parallel.
In this paper, we design the first work-efficient parallel algorithm for hierarchy construction in nucleus decomposition,
where the work, or the total number of operations, matches the best sequential time complexity.
Our algorithm runs in $O(m\alpha^{s-2})$ expected work and $O(k \log n + \rho_{(r,s)}(G)\log n + \log^2 n)$ span \whp{} (parallel time), 
where $\alpha$ is the arboricity of the input graph $G$, $\rho_{(r,s)}(G)$ is the $(r,s)$ peeling complexity of $G$, and $k$ is the maximum $(r, s)$-clique core number in $G$.
The key to our theoretical efficiency is our careful construction of subgraphs representing the $s$-clique-connectivity of $r$-cliques, that allows us to exploit linear-work graph connectivity instead of more expensive union-finds as used in prior work. 
Our approach gives as a byproduct the most theoretically-efficient serial algorithm for computing the hierarchy, improving upon the previous best known serial bounds by \sariyuce{} and Pinar~\cite{SaPi16}.

We also present a practical parallel algorithm that interleaves the hierarchy construction with the computation of the coreness values. 
Prior work by \sariyuce{} and Pinar~\cite{SaPi16} also includes a (serial) interleaved hierarchy algorithm. 
However, their algorithm requires storing 
all adjacent $r$-cliques with different coreness values, which could
potentially be proportional to the number of $s$-cliques in $G$ (i.e., use $O(m\alpha^{s-2})$ space), and which results in sequential dependencies in their post-processing step to construct the hierarchy. 
Our parallel algorithm fully interleaves the hierarchy construction with the coreness computation, and uses only two additional arrays of size proportional to the number of $r$-cliques in $G$. Our algorithm uses a concurrent union-find data structure in an innovative way.
Also, our post-processing step to construct the hierarchy tree is fully parallel. 
Our main insight is a technique to fully extract the connectivity information from adjacent $r$-cliques with different core numbers while computing the coreness values.

Note that the span of the above algorithms can be large for graphs with large peeling complexity ($\rho_{(r,s)}(G)$).  
We introduce an approximate algorithm for nucleus decomposition and show that it can achieve work efficiency and polylogarithmic span. Our algorithm relaxes the peeling order by allowing all $r$-cliques within a $(\binom{s}{r}+\epsilon)$ factor of the current value of $k$ to be peeled in parallel. 
We show that our algorithm generates coreness estimates that are an $(\binom{s}{r}+\epsilon)$-approximation of the true coreness values. 

We experimentally study our parallel algorithms on real-world graphs using different $(r,s)$ values, for up to $r < s \leq 7$. 
On a 30-core machine with two-way hyper-threading, our exact algorithm which generates both the coreness numbers and the hierarchy
achieves up to a 30.96x self-relative parallel speedup, and a 3.76--58.84x speedup over the state-of-the-art sequential algorithm by \sariyuce{} and Pinar~\cite{SaPi16}. 
In addition, we show that on the same machine our approximate algorithm is up to 3.3x faster than our exact algorithm for computing coreness values, while generating coreness estimates with a multiplicative error of 1--2.92x (with a median error of 1.33x).
Our algorithms are able to compute
the $(r,s)$ nucleus decomposition hierarchy for $r>3$ and $s>4$ on graphs with over a hundred million edges for the first time.

We summarize our contributions below:
\begin{itemize}[topsep=1pt,itemsep=0pt,parsep=0pt,leftmargin=10pt]
  \item The first exact and approximate parallel algorithms for nucleus decomposition that generates the hierarchy with strong theoretical bounds on the work, span, and approximation guarantees.
  \item A number of practical optimizations that lead to fast implementations of our algorithms.
  \item Experiments showing that our new exact algorithm achieves 3.7--58.8x speedups over state of the art on a 30-core machine with two-way hyper-threading.
  \item Experiments showing that our new approximate algorithm achieves up to 3.3x speedups over our exact algorithm, while generating coreness estimates with a multiplicative error of 1--2.9x (with a median error of 1.3x).
\end{itemize}

Our code is publicly available at \url{https://github.com/jeshi96/arb-nucleus-hierarchy}. 
\section{Related Work}\label{sec:related}
The $k$-core~\cite{seidman83network,matula83smallest} and $k$-truss problems~\cite{cohen2008trusses,zhang2012extracting,zhao2012large} are classic problems relating to dense substructures in graphs.
Many algorithms have been developed for $k$-cores and $k$-trusses in the static~\cite{Ghaffari2019,montresor2012distributed, farach2014computing, khaouid2015k, Kabir2017, DhBlSh17,Wen2019,DLRSSY22, Wang2012, chen2014distributed, Zou16, kabir2017parallel, smith2017truss, ChLaSuWaLu20,CoDeGrMaVe18, LoSpKuSrPoFrMc18,  BlLoKi19} and dynamic~\cite{Li2014, Wen2019,Zhang2017,sariyuce2016incremental,Akbas2017,Huang2014, aridhi2016distributed, lin2021dynamickcore,Luo2021,Hua2020,SCS20,Jin2018,Luo2020,Zhang2019a,Huang2015,LSYDS22} settings.
Similar ideas have been studied in bipartite graphs~\cite{Shi2020,Wang2020,lakhotia20receipt,SariyuceP18, liu2020alphabeta,Huang2023}.

A related problem is the $k$-clique densest subgraph problem~\cite{Tsourakakis15}, which defines the density of subgraphs based on the number of $k$-cliques in it, rather than the number of edges. Fang \textit{et al.}~\cite{FaYuChLaLi19} further generalize this notion to arbitrary fixed-sized subgraphs (although they only present experiments for cliques). Similar to $k$-core and $k$-truss, algorithms for approximating the $k$-clique densest subgraph are based on iteratively peeling (removing) elements from the graph.
Shi \textit{et al.}~\cite{shi2020parallel} present efficient parallel algorithms for solving the $k$-clique densest subgraph problem.

\sariyuce{} \textit{et al.}\ define the \nucleusdecomp{r}{s} problem and show that it can be used to find higher quality dense substructures in graphs that previous approaches. 
They provide efficient sequential~\cite{Sariyuce2017,SaPi16} and parallel algorithms~\cite{sariyuce2017parallel} for this problem. 
\sariyuce{} \textit{et al.}~\cite{sariyuce2017parallel} present two parallel algorithms for the nucleus decomposition problem: (1) a global peeling-based algorithm and (2) a local update model that iterates until convergence. 
They also introduce a sequential algorithm for constructing the nucleus decomposition hierarchy~\cite{SaPi16}, but as far as we know there are no parallel algorithms for constructing this hierarchy.
Their algorithm is not work-efficient as it uses union-find. We also note that their space-usage can in theory be as large as $O(n\alpha^{s-2})$, i.e., proportional to the number of $s$-cliques in the graph.
Chu \textit{et al.}~\cite{Chu2022seqkcore} present a parallel algorithm for generating the $k$-core decomposition hierarchy, which is a special case of nucleus decomposition for $r=1$ and $s=2$, but their algorithm does not trivially generalize to higher $r$ and $s$.
Their serial and parallel algorithms both use union-find and run in $O(m \alpha(n))$ work (where $\alpha(n)$ is the inverse Ackermann function), which is not work-efficient, and their parallel algorithm has depth that depends on the peeling-complexity of the input.
Shi \textit{et al.}~\cite{shi2021nucleus} present an improved parallel algorithm for nucleus decomposition, with good theoretical guarantees and practical performance, but it does not generate the hierarchy.
Their algorithm is work-efficient, in that it runs in work proportional to enumerating all $s$-cliques, i.e., $O(m\alpha^{s-2})$ work. They leverage a work-efficient parallel clique counting algorithm~\cite{shi2020parallel}, along with a multi-level hash table structure to store data associated with cliques space efficiently, and techniques for  traversing this structure in a cache-friendly manner.
More recently, \sariyuce{} 
generalizes 
the $r$-cliques and $s$-cliques in nucleus decomposition to any pair of subgraphs~\cite{sariyuce2021motif}.
There has also been work on nucleus decomposition in probabilistic graphs~\cite{esfahani2020nucleus}.

\begin{figure}[t]
   \centering
   \vspace{-1em}
   \includegraphics[width=0.4\textwidth]{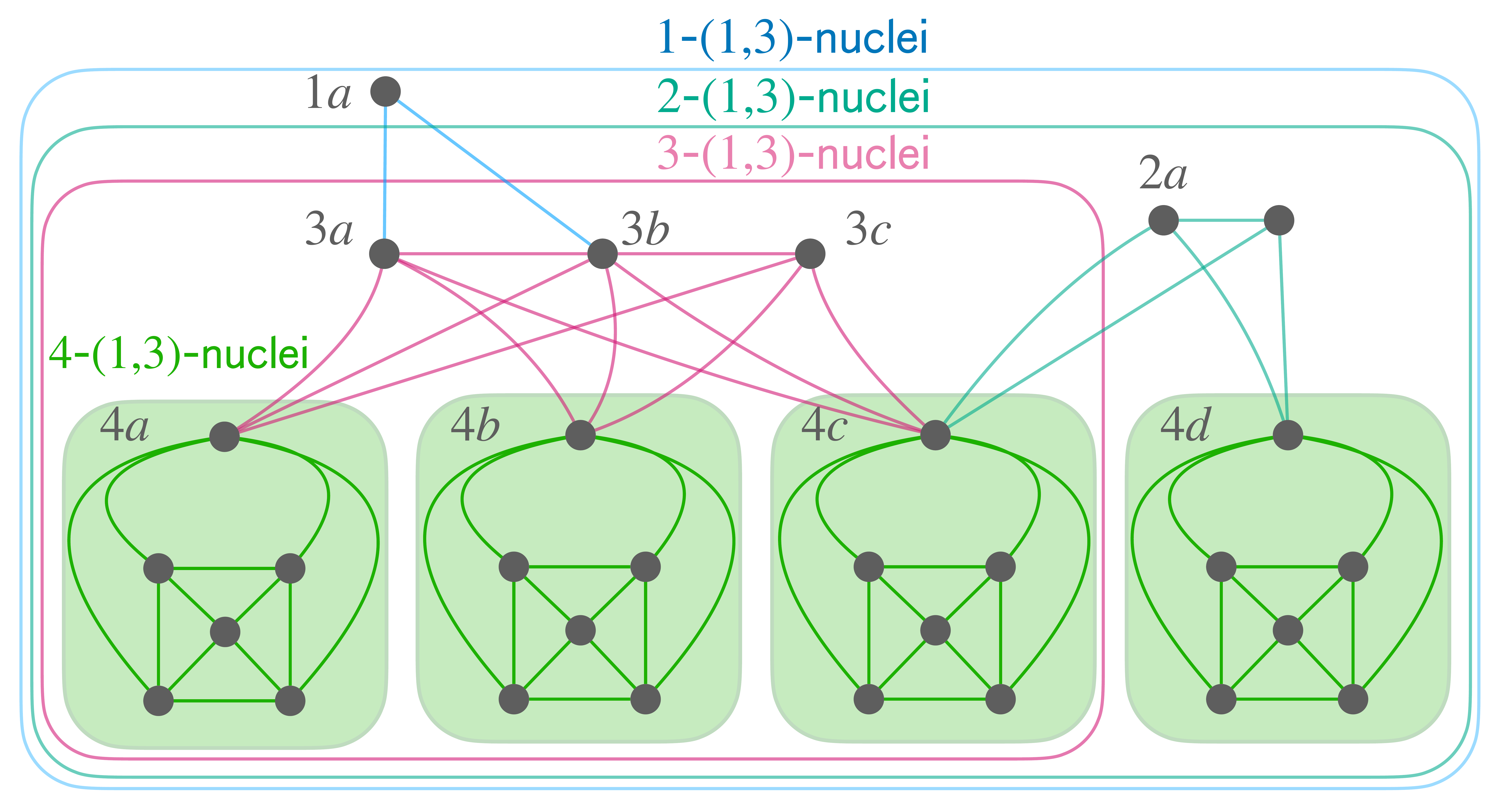}
   \vspace{-1em}
   \caption{
An example of the $(1, 3)$ nucleus hierarchy.
   }
    \label{fig:graph}
\end{figure}

\section{Preliminaries}\label{sec:prelims}
\myparagraph{Model of Computation} 
We use the work-span model of computation for our theoretical analysis, which is a standard model for shared-memory parallel algorithms~\cite{JaJa92,CLRS}.
The \defn{work} $\mathcal{W}$ of an algorithm is the total number of operations, and the \defn{span} (parallel time) $\mathcal{S}$ of an algorithm is the longest dependency path. 
Using a randomized work-stealing scheduler~\cite{BlLe99} on $\mathcal{P}$ processors, we can obtain a running time of $\mathcal{W}/\mathcal{P} + O(\mathcal{S})$ in expectation.
Our goal is to develop \defn{work-efficient} parallel algorithms under this model, or algorithms with a work complexity that asymptotically matches the best-known sequential time complexity for the given problem.

\myparagraph{Graph Notation and Definitions} We consider simple and undirected graphs $G =(V,E)$. We let $n = |V|$ and $m = |E|$. For analysis, we assume $m=\BigOmega{n}$.
The \defn{arboricity ($\bm{\alpha}$)} of a graph is the minimum number
of spanning forests needed to cover the graph. $\alpha$ is
upper bounded by $\BigO{\sqrt{m}}$ and lower bounded by
$\BigOmega{1}$~\cite{ChNi85}.
An \defn{$k$-clique} is a set of vertices such that all ${k \choose 2}$ edges exist among them. Enumerating $k$-cliques can be done in $\BigO{m\alpha^{k-2}}$ work and $\BigO{\log^2 n}$ span with high probability (\whp{})~\cite{shi2020parallel}.\footnote{We
  say $O(f(n))$ \defn{with high probability (\whp{})} to indicate
  $O(cf(n))$ with probability at least $1-n^{-c}$ for $c \geq 1$,
  where $n$ is the input size.}  

A \defn{\cnucleus{c}{r}{s}} is a maximal subgraph $H$ of an undirected graph formed by the union of $s$-cliques $S$, such that each $r$-clique $R$ in $H$ has induced $s$-clique degree at least $c$ (i.e., each $r$-clique is contained within at least $c$ induced $s$-cliques). 
The goal of the \defn{\nucleusdecomp{r}{s} problem} 
is to compute the following: (1) the \defn{$(r,s)$-clique core number} of each $r$-clique $R$, or the maximum $c$ such that $R$ is contained within a \cnucleus{c}{r}{s} and (2) a hierarchy over the nuclei, where for $c'<c$, a $c'$-$(r,s)$ nucleus $A$ is a descendant of a $c$-$(r,s)$ nucleus $B$ if $A$ is a subgraph of $B$. 
In this paper, similar to all prior work on nucleus computations, we take $r$ and $s$ to be constants and ignore constants depending on these values in our bounds.
The $k$-core and $k$-truss problems correspond to the $k$-$(1,2)$ and $k$-$(2,3)$ nucleus, respectively.
Figure~\ref{fig:graph} shows the hierarchy for $(1,3)$ nucleus.
We call two $r$-cliques \defn{$s$-clique-adjacent} if there exists an $s$-clique $S$ such that both $r$-cliques are subgraphs of $S$.
We also define the \defn{$s$-clique-degree} of a $r$-clique $R$ to be the number of $s$-cliques that $R$ is contained within.

We also consider approximate nucleus decompositions in this paper.
If the true $(r,s)$-clique core number of an $r$-clique $R$ is $k_R$, then
a \defn{$\gamma$-approximate $(r,s)$-clique core number} of $R$ is a value that is at least $k_R$ and at most $\gamma k_R$, where $\gamma > 1$.

An \defn{$a$-orientation} of an undirected graph is a total ordering on the vertices such that when edges in the graph are directed from vertices lower in the ordering to vertices higher in the ordering, the out-degree of each vertex is bounded by $a$.
Shi \textit{et al.} and Besta et al.~\cite{Besta2020} provide parallel work-efficient algorithms for computing an $O(\alpha)$-orientation, which take $O(m)$ work and $O(\log^2 n)$ span.

A \defn{connected component} in an undirected graph is a maximal subgraph such that all vertices in the subgraph are reachable from one another. Computing the connected components in a graph can be done in \BigO{m} work and \BigO{\log n} span \whp{}~\cite{Gazit1991}.

A \defn{union-find} data structure is used to represent collections of sets and supports the following two operations: \defn{unite($x$,$y$)} joins the sets that contain $x$ and $y$ and \defn{find($x$)} returns the identifier of the set containing $x$. Trees are used to implement union-find data structures, where elements have a parent pointer, and the root of a tree corresponds to the representative of the set.
Path compression can be done during unite and find operations to shortcut the pointers of elements traversed to point to the root of the set. We use the concurrent union-find data structure by Jayanti et al.~\cite{Jayanti2016}.

We use the notation $[n]$ to refer to the range of integers $[1,\ldots,n]$.

\myparagraph{Parallel Primitives} We use the following parallel primitives in our algorithms. 
We  use \defn{parallel hash tables}, which support $n$ insertion, deletion, and membership queries, in $O(n)$
work and $O(\log n)$ span \whp{}~\cite{gil91a}.
  \defn{List ranking} takes a linked list as input and returns the distance of each element to the end of the linked list. For a linked list of $n$ elements, list ranking can be solved in \BigO{n}
work and \BigO{\log n} span~\cite{JaJa92}. In our algorithms, we use list ranking to compute a unique identifier for each element in a linked list so that we can write them to an array in parallel.

\defn{Compare-and-swap$(x,old,new)$} attempts to atomically write the value $new$ into memory location $x$ if $x$ current stores the value $old$; it returns true if the write succeeds and false otherwise (meaning the value of $x$ was changed by another thread after it was read into $old$). We assume compare-and-swaps take $\BigO{1}$ work and span.

A \defn{parallel bucketing structure}
maintains a mapping from identifiers to buckets, which we use to group
$r$-cliques according to their incident $s$-clique counts~\cite{DhBlSh17}. The
buckets can change, and the structure can efficiently update these buckets. 
We take identifiers to be values associated with
$r$-cliques, and use the structure to repeatedly extract all $r$-cliques in
the minimum bucket, which can cause the buckets of other $r$-cliques to change (other $r$-cliques sharing vertices with extracted $r$-cliques in our algorithm).

\section{Overview of Contributions}

We present several novel algorithms for efficient parallel hierarchy construction for $(r, s)$ nucleus decomposition.
The prior state-of-the-art work on parallel $(r, s)$ nucleus decomposition, \ournd{}~\cite{shi2021nucleus}, only computes the $(r, s)$-clique core numbers and does not construct the hierarchy, which is non-trivial to accomplish. 
We first present in Section~\ref{sec:alg} a new theoretically efficient parallel hierarchy construction algorithm, \ourndh{}, that given the $(r, s)$-clique core numbers from~\cite{shi2021nucleus}, constructs the full hierarchy. In this two-phase algorithm, the hierarchy construction is entirely separate from the computation of the core numbers. 
The main innovation is a clever technique to store subgraphs corresponding to each $(r, s)$-clique core, such that the each subgraph only needs to be materialized when processing the corresponding level in the hierarchy tree. This limits the number of times we must iterate over each $r$-clique, and allows us to use a linear-work connectivity subroutine, giving us our theoretically efficient bounds.

We also present in Section~\ref{sec:approx} a new parallel approximation algorithm, \ourapproxnd{}, that computes approximate $(r, s)$-clique core numbers in polylogarithmic span without increasing the work. 
Notably, our parallel approximation algorithm is the first to achieve the dual guarantees of work-efficiency and polylogarithmic span.
We then combine \ourapproxnd{} with our theoretically efficient hierarchy construction subroutine, to form an approximate hierarchy algorithm, \ourapprox{}.

We observe that it is often not practically efficient to conduct a two-phase hierarchy algorithm (separating the hierarchy construction from the computation of the core numbers), and it is instead faster to construct the hierarchy directly while computing the core numbers. 
However, this presents a new series of challenges, where core numbers obtained later in the algorithm given by~\cite{shi2021nucleus} may affect all levels of the hierarchy tree, causing significant global changes in the tree structure as core numbers are computed. 
We present practical solutions in Section~\ref{sec:practical} limiting the cascading effects of these changes by compactly storing the tree using two simple data structures, a union-find structure and a hash table, and grouping $r$-cliques on-the-fly to reduce the propagating changes. 
We then introduce an efficient parallel post-processing step, to explicitly construct the final hierarchy tree from the two data structures.

Finally, we implement all of our algorithms, and present a comprehensive experimental evaluation in Section~\ref{sec:eval}.

\section{Nucleus Decomposition Hierarchy}
\label{sec:alg}

\begin{algorithm}[!t]
  \footnotesize
 \begin{algorithmic}[1]
 \State Initialize $r$, $s$ \Comment{$r$ and $s$ for $(r,s)$ nucleus decomposition}

\Procedure {\ourndh{}}{$G = (V,E)$}
\State \algblue{$ND \leftarrow$ \ournd{}($G$)} \Comment{Compute the nucleus core numbers, where $ND$ maps $r$-cliques to their core numbers} \label{alg-ndh:init-nd}
\State $k \leftarrow$ maximum core number in $ND$
\State For each $i \in [k]$, let $L_i$ denote a hash table, where the keys are $r$-cliques and the values are linked lists \label{alg-ndh:init-hash}
  \ParFor{\algblue{all $s$-cliques $S$ in $G$}} \label{alg-ndh:start-init-hash}
    \ParFor{all pairs of $r$-cliques $R, R'$ in $S$ where $ND[R'] \leq ND[R]$}\label{alg-ndh:begin-comb}
        \State Add $R'$ to the linked list keyed by $R$ in $L_{ND[R']}$ \label{alg-ndh:comb-construct}
    \EndParFor\label{alg-ndh:end-comb}
  \EndParFor\label{alg-ndh:end-init-hash}
\State Initialize the hierarchy tree $T$ with leaves corresponding to each $r$-clique \label{alg-ndh:init-tree}
\State For each $i \in [k]$, let $ID_i$ denote a hash table, where the keys are $r$-cliques and the values are $r$-cliques \label{alg-ndh:alloc-id}
\State Initialize each $ID_i$ to contain each key in $L_i$, mapped to itself \label{alg-ndh:init-id}
\For{$i \in \{ k, k-1, \ldots, 1\}$} \label{alg-ndh:start-big-loop}
  \State 
  Relabel each $r$-clique in each linked list in $L_i$ with its corresponding value in $ID_i$
  \label{alg-ndh:map-id}
  \State Apply parallel list ranking to transform the linked lists in $L_i$ into arrays, which forms a graph $H$ (where the edges are given by each key paired with each element in its corresponding linked list) \label{alg-ndh:list-rank}
  \State Run parallel linear-work connectivity on $H$ \label{alg-ndh:connect}
  \ParFor{each connected component $\mathcal{C} = \{ R_1, \ldots, R_c\}$ in $H$ where $|\mathcal{C}| \geq 2$}
    \State Construct a new parent in $T$ for all of the nodes corresponding to each $R_\ell$ (for $\ell \in [c]$), and represent the parent as the $r$-clique $R_1$ \label{alg-ndh:new-parent}
    \ParFor{each $j < i$ } \label{alg-ndh:start-ji-loop}
      \State Concatenate in parallel the linked lists corresponding to all $R_\ell$ (for $\ell \in [c]$) in $L_j$, as the updated value for the key $R_1$ in $L_j$ \label{alg-ndh:concat}
      \State For each $R_\ell$ (for $\ell \in [c]$), update the value of $R_\ell$ in $ID_j$ to be $R_1$ \label{alg-ndh:update-id}
    \EndParFor \label{alg-ndh:end-ji-loop}
  \EndParFor
\EndFor \label{alg-ndh:end-big-loop}
\State \Return $T$ \label{alg-ndh:return}
  \EndProcedure
 \end{algorithmic}
\caption{Parallel $(r,s)$ nucleus hierarchy algorithm}
 \label{alg:ndh}
\end{algorithm}

We now describe our theoretically efficient parallel nucleus decomposition hierarchy algorithm, \ourndh{}. \ourndh{} computes the hierarchy by first running an efficient parallel nucleus decomposition algorithm, \ournd{}~\cite{shi2021nucleus}, in order to obtain the $(r, s)$-clique core numbers corresponding to each $r$-clique. It then constructs a data structure consisting of $k$ levels, where $k$ is the maximum $(r, s)$-clique core number. Each level is represented by a hash table mapping sets of $r$-cliques to a linked list of $r$-cliques, which is used to efficiently store $s$-clique-adjacent $r$-cliques. \ourndh{} proceeds in levels through this data structure to construct the hierarchy tree from the bottom up, by performing linear-work connectivity on each level and updating the connectivity information in prior levels as a result.

\begin{figure}[t]
    \centering
   \begin{subfigure}{.4\textwidth}
   \centering
   \includegraphics[width=\columnwidth]{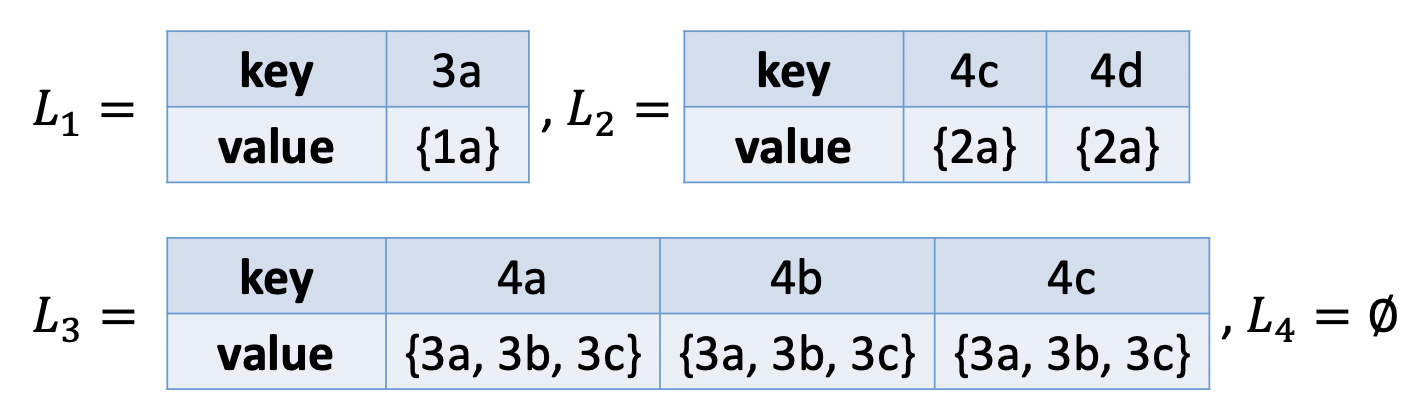}
   \end{subfigure}%
   \qquad
   \begin{subfigure}{.5\textwidth}
   \centering
   \includegraphics[width=\columnwidth]{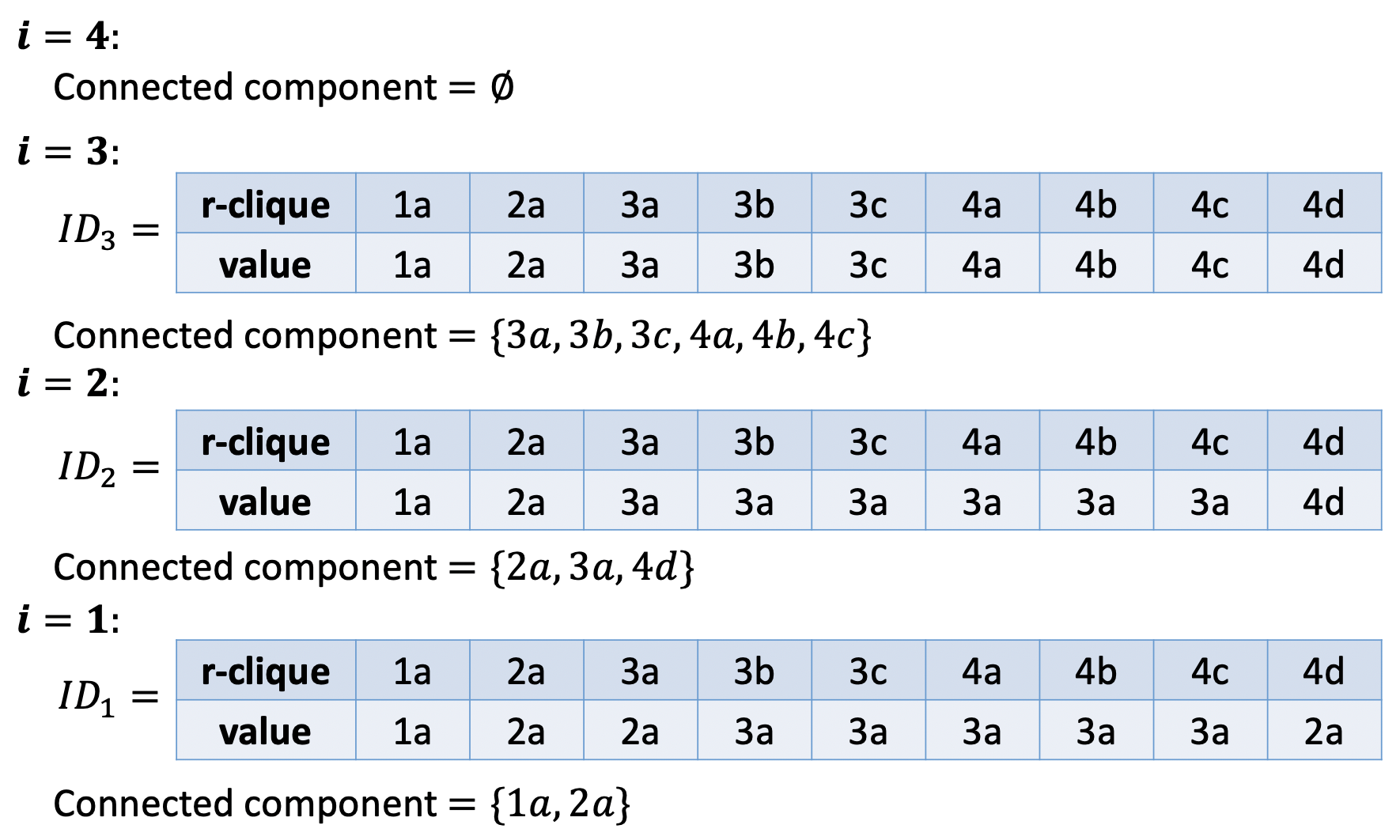}
      \end{subfigure}
      \vspace{-8pt}
   \caption{An example of the $L_i$ data structures maintained by \ourndh{} while computing the $(1,3)$-nucleus hierarchy on the graph in Figure~\ref{fig:graph}. For each round $i$ of the hierarchy construction, the connected components of $H$ and the $ID_{i}$ table used to construct $H$ is shown (except $ID_4$, which maps every $r$-clique to itself). 
   }
    \label{fig:alg1}
\end{figure}

\begin{figure}[t]
\vspace{-5pt}
    \centering
   \includegraphics[width=0.6\columnwidth]{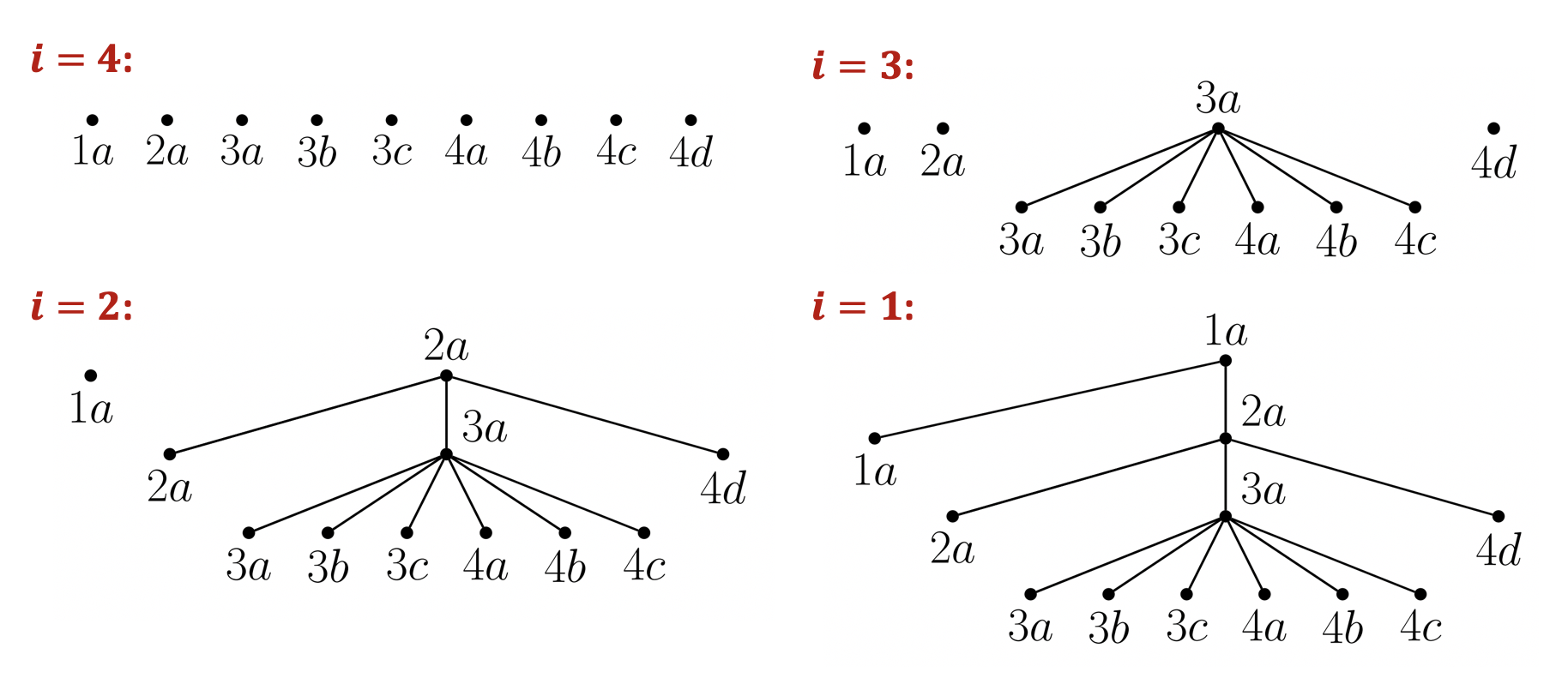}
   \vspace{-18pt}
   \caption{An example of the intermediate hierarchy trees $T$ after each round $i$ in \ourndh{}, while computing the $(1,3)$-nucleus hierarchy on the graph in Figure~\ref{fig:graph}.
   }
   \vspace{-1pt}
    \label{fig:alg1-tree}
\end{figure}

\myparagraph{Our Algorithm} We now provide a more detailed description of our algorithm. The pseudocode is in Algorithm \ref{alg:ndh}. 
Note that we have highlighted in blue the parts that derive directly from prior work; the remaining pseudocode is novel to this paper.
We refer to the example of $(1, 3)$-nucleus decomposition in Figure~\ref{fig:graph}. For simplicity, we have omitted labeling some of the vertices in the graph, and do not include them in our discussion.
The number in front of each vertex label is its core number. 

\ourndh{} first calls \ournd{} on Line~\ref{alg-ndh:init-nd}, to compute the $(r, s)$-clique core numbers of each $r$-clique. 
Note that this is the only instance where \ourndh{} uses a subroutine from the prior work~\cite{shi2021nucleus}, and the rest of the pseudocode is novel.
It stores these values in a hash table, $ND$, keyed by the $r$-cliques.
Then, it constructs a data structure consisting of $k$ hash tables on Line~\ref{alg-ndh:init-hash}, where $k$ is the maximum $(r,s)$-clique core number. Each hash table, $L_i$, maps $r$-cliques to a linked list of $r$-cliques, where $i$ corresponds to a core number. 
The first stage of \ourndh{} inserts all $s$-clique-adjacent $r$-cliques into the hash tables on Lines~\ref{alg-ndh:start-init-hash}--\ref{alg-ndh:end-init-hash}. Specifically, for adjacent $r$-cliques $R$ and $R'$ where $ND[R'] \leq ND[R]$, we insert $R'$ into the hash table corresponding to $R'$'s core number, $L_{ND[R']}$, with $\{R\}$ as the key. If an entry for $\{R\}$ already exists in $L_{ND[R']}$, we append $R'$ to the existing linked list. 
The hash tables are shown in Figure~\ref{fig:alg1}. For instance, 
$R' = 1a$ is adjacent to $R = 3a$, so we add $1a$ to the linked list keyed by $3a$ in $L_1$. 
In order to iterate in parallel over all $r$-cliques and over all $s$-cliques containing each $r$-clique, our algorithm uses a $s$-clique enumeration algorithm based on previous work by Shi \textit{et al.}~\cite{shi2020parallel}, which recursively finds and lists $c$-cliques in parallel and which can efficiently extend given $r$-cliques to find the $s$-cliques they are contained within.
\ourndh{} then takes all combinations of $r$ vertices in each discovered $s$-clique to find adjacent $r$-cliques in Lines~\ref{alg-ndh:begin-comb}--\ref{alg-ndh:end-comb}. 
Note that this $s$-clique enumeration subroutine is already used in \ournd{} in order to compute the $(r, s)$-clique core numbers of each $r$-clique, and in practice, we construct the hash tables $L_i$ while computing the core numbers in \ournd{} (rather than running the $s$-clique enumeration subroutine twice).

After constructing the initial set of hash tables, \ourndh{} proceeds to construct the nucleus decomposition hierarchy on Lines~\ref{alg-ndh:init-tree}--\ref{alg-ndh:update-id}. 
The broad idea is to construct the hierarchy starting at the leaf nodes, each of which correspond to an $r$-clique, and to merge tree nodes from the bottom up (i.e., in decreasing order of core number).
The algorithm begins by considering only connected components formed by $r$-cliques with the greatest core number, $k$, which dictate the leaf nodes to be merged into super-nodes at the second-to-last level of the hierarchy tree based on their $s$-clique connectivity.
In subsequent rounds, when processing core number $i$, the algorithm considers connected components formed by $r$-cliques with core numbers $\geq i$, which similarly dictate the merges to be performed in the next level of the hierarchy tree. 
\ourndh{} efficiently maintains connected components from higher core numbers when computing connected components at lower core numbers, by concatenating the relevant linked lists in the hash tables corresponding to the lower core numbers when processing higher core numbers. 
Figure~\ref{fig:alg1-tree} shows the construction of this hierarchy throughout the different rounds (for $i = 4, \ldots, 1$), and Figure~\ref{fig:alg1} lists the connected components computed in each round.
In our example, since we omit the other vertices in
the $4$-$(1,3)$-nucleus besides $4a$, $4b$, $4c$, and $4d$ (for simplicity), 
we begin with singleton leaf nodes after processing round $i = 4$.

In more detail, on Line~\ref{alg-ndh:init-tree}, we begin with a hierarchy tree $T$ consisting only of leaf nodes corresponding to each $r$-clique. We also initialize a data structure consisting of $k$ hash tables on Lines~\ref{alg-ndh:alloc-id}--\ref{alg-ndh:init-id}, where each hash table $ID_i$ maps $r$-cliques to $r$-cliques. The idea is to maintain a mapping of each $r$-clique to the component that it is contained within in $L_i$, for the corresponding level. Initially, each $ID_i$ maps each key in $L_i$ to itself.

Then, \ourndh{} considers each core number $i$, starting from $k$ and going down to $1$ (Line~\ref{alg-ndh:start-big-loop}). For each $i$, we relabel each $r$-clique in each linked list in $L_i$ with its corresponding value in $ID_i$ (Line~\ref{alg-ndh:map-id}). Then, it 
uses parallel list ranking to convert all linked lists in $L_i$ to arrays, forming a graph $H$, where 
edges are given by each key paired with each element in its corresponding linked list (Line~\ref{alg-ndh:list-rank}). 
Note that the edges actually denote $s$-clique-adjacent $r$-cliques (or components of $r$-cliques). The vertices correspond to $r$-cliques, which represent components of $r$-cliques in $G$. In our example, we note that there is nothing to be done for round $i = 4$, since $L_4$ is empty.
Therefore, we begin by processing the hash table $L_3$ in round $i = 3$. $ID_3$ maps every vertex to itself, so we do not relabel any of the labels in $L_3$. The graph $H$ that we construct consists of the vertices $3a$, $3b$, $3c$, $4a$, $4b$, and $4c$, with edges given by $\{3a, 3b, 3c\} \times \{ 4a, 4b, 4c\}$.

\ourndh{} proceeds by running a linear-work parallel connectivity algorithm on $H$ (Line~\ref{alg-ndh:connect}), and processes the given connected components. 
Note that each connected component $C$ consists of a set of vertices in $H$, which represents a set of $r$-cliques, say $\{R_1, \ldots, R_c\}$.
In our example, in round $i = 3$, we have a single connected component in $H$ consisting of all of the vertices, $3a$, $3b$, $3c$, $4a$, $4b$, and $4c$.
Then, for each such connected component representing more than one $r$-clique, in the hierarchy tree $T$, we construct a new parent for the nodes corresponding to each $R_\ell$ for $\ell \in [c]$ (Line~\ref{alg-ndh:new-parent}). 
Note that each $R_\ell$ could correspond to a subtree containing multiple tree nodes, owing to parent nodes constructed in previous steps.
We represent the new parent by the $r$-clique $R_1$, which we designate arbitrarily as the representative for the connected component $C$.
In Figure~\ref{fig:alg1-tree}, under $i = 3$, we construct a new parent node labeled by $3a$, which we designate as the representative of the component $\{3a, 3b, 3c, 4a, 4b, 4c\}$.

Finally, for each core number $j < i$, we update the connectivity information on each hash table $L_j$, by concatenating all linked lists in $L_j$ corresponding to the $r$-cliques in the component $C$ (Line~\ref{alg-ndh:concat}). More precisely, we update the value of the key $R_1$ to be the concatenation, or we insert the concatenation into $L_j$ with $R_1$ as the key if $R_1$ does not already exist in the hash table (this is possible if $R_1$ had no neighbors with core number $j$).  
We use tombstones to delete the other keys $R_\ell$ ($\ell \neq 1$) in each $L_j$. 
Additionally, on Line~\ref{alg-ndh:update-id}, we update the component ID in $ID_j$ of each $R_\ell$ to be $R_1$.
For $i=3$, we see that in our example there are no lists to concatenate, but in $ID_2$, we map each of $3b, 3c, 4a, 4b$, and $4c$ to the representative $3a$.

We repeat this process until we have processed all $k$ rounds. 
In our example, for round $i = 2$, we relabel $4c$ in $L_2$ with $3a$, as given by $ID_2$, and we construct a subgraph $H$ with vertices $2a$, $3a$, and $4d$, and edges $(2a, 3a)$ and $(2a, 4d)$. Again, there is only one connected component, given by $\{2a, 3a, 4d\}$. In the hierarchy tree for round $i = 2$, we construct a new parent labeled with the representative $2a$, whose children are the leaves $2a$ and $4d$, and the previously constructed parent node $3a$. Again, there are no lists to concatenate in $L_1$, but in $ID_1$, we map both $3a$ and $4d$ to the representative $2a$.
Then, for round $i=1$, we relabel $3a$ in $L_1$ with $2a$, as given by $ID_1$. We construct a subgraph $H$ with vertices $1a$ and $2a$, and a single edge between them. There is only one connected component, $\{1a, 2a\}$, and in the hierarchy tree for round $i=1$, we construct a new parent labeled with the representative $1a$, whose children are the leaf $1a$ and the previously constructed parent node $2a$. This concludes the construction of the hierarchy $T$ in our example.

\myparagraph{Theoretical Efficiency} We now analyze the theoretical efficiency of our hierarchy algorithm, \ourndh{}. Note that as in Shi \textit{et al.}'s~\cite{shi2021nucleus} work, $\rho_{(r,s)}(G)$ is defined to be the \defn{$\boldsymbol{(r,s)}$ peeling complexity} of
$G$, or the number of rounds needed to peel the graph where in each
round, all $r$-cliques with the minimum $s$-clique count are peeled (removed).  Importantly, $k \leq \rho_{(r,s)}(G) \leq O(m\alpha^{r-2})$, since at least
one $r$-clique is peeled in each round, and the number of rounds is at least the maximum $(r, s)$-clique core number in $G$.

\vspace{-3pt}
\begin{theorem}\label{thm:ndh}
  \ourndh{} computes the $(r,s)$ nucleus decomposition hierarchy in $\BigO{m\alpha^{s-2} }$ expected work and $\BigO{k \log n +\allowbreak \rho_{(r,s)}(G) \log n +\allowbreak \log^2 n}$ span \whp{}, where
  $\rho_{(r,s)}(G)$ is the $(r,s)$ peeling complexity and $k$ is the maximum $(r, s)$-clique core number.
\end{theorem}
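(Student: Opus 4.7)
The plan is to separate the analysis into three contributions: (i) the call to \ournd{} on Line~\ref{alg-ndh:init-nd}, whose work and span come directly from Shi \emph{et al.}; (ii) the construction of the hash tables $L_i$ and $ID_i$ (Lines~\ref{alg-ndh:init-hash}--\ref{alg-ndh:init-id}); and (iii) the main hierarchy loop on Lines~\ref{alg-ndh:start-big-loop}--\ref{alg-ndh:end-big-loop}. For (i), I would simply cite that \ournd{} runs in $O(m\alpha^{s-2})$ expected work and $O(\rho_{(r,s)}(G)\log n + \log^2 n)$ span w.h.p. For (ii), the work is dominated by the parallel $s$-clique enumeration of Shi \emph{et al.}, which costs $O(m\alpha^{s-2})$ work and $O(\log^2 n)$ span w.h.p., together with $\binom{s}{r}^2 = O(1)$ hash-table inserts per enumerated $s$-clique; each insert costs $O(1)$ expected work and $O(\log n)$ span w.h.p. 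The initialization of each $ID_i$ is proportional to $|L_i|$.

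For the work in the main loop (iii), the key observation is that the total number of list elements summed across all $L_i$ is at most the number of ordered pairs of $s$-clique-adjacent $r$-cliques, which is $O(m\alpha^{s-2})$ because each $s$-clique contributes $\binom{s}{r}^2 = O(1)$ such pairs, and this total does not grow: the concatenation on Line~\ref{alg-ndh:concat} only re-links existing nodes using tail pointers in $O(c)$ work per component of size $c$. I would charge the relabeling, list-ranking, and connectivity costs at round $i$ to the elements currently residing in $L_i$, and charge each concatenation at level $j < i$ to the key in $L_j$ that is eliminated by the merge; since only the initial keys in $L_j$ can ever be eliminated, the total concatenation work summed over all rounds is $O(m\alpha^{s-2})$. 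Linear-work connectivity~\cite{Gazit1991} and parallel list ranking then keep each per-round computation linear in the amount of data touched, so the cumulative work over all rounds is $O(m\alpha^{s-2})$.

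For the span, I would argue that within a single iteration of the outer loop every substep (relabel via $ID_i$, list ranking, connectivity, creation of the new parent, and the parallel-for over $j < i$ that performs concatenation and updates $ID_j$) is performed in parallel and each takes $O(\log n)$ span w.h.p.; crucially, the inner loop on Lines~\ref{alg-ndh:start-ji-loop}--\ref{alg-ndh:end-ji-loop} is parallelized over both components and over $j$, so its span is $O(\log n)$ rather than $O(i \log n)$. Since the loop runs $k$ iterations, the hierarchy-construction phase contributes $O(k \log n)$ span, and adding the contributions from (i) and (ii) gives the stated $O(k\log n + \rho_{(r,s)}(G)\log n + \log^2 n)$ bound. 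The main obstacle I anticipate is the charging argument for the concatenation step: naively it appears that a merge at level $i$ does $O(c)$ work at each of $i-1$ lower levels, which would blow up to $\Omega(ck)$. Carefully defining the potential (initial keys remaining in each $L_j$) and observing that tombstoned entries cannot be revived, so that each level-$j$ key is merged away at most once, is the crux; I would also need to verify that the $ID_j$ lookups and updates can be folded into the same amortization so that they do not contribute extra work.
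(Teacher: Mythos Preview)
Your overall decomposition into (i) the \ournd{} call, (ii) construction of the $L_i$ tables, and (iii) the main loop matches the paper's proof, and your span analysis (each of the $k$ outer iterations contributes $O(\log n)$ span, with the inner loop over $j<i$ fully parallel) is also the same as the paper's.

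The gap is in your work-charging for Lines~\ref{alg-ndh:start-ji-loop}--\ref{alg-ndh:end-ji-loop}. You propose to charge the $O(c)$ work spent at each level $j<i$ to ``the key in $L_j$ that is eliminated by the merge,'' arguing that only initial keys can be eliminated and that the sum of initial keys over all $L_j$ is $O(m\alpha^{s-2})$. This breaks for two reasons. First, the merged vertices $R_2,\dots,R_c$ need not all be keys in $L_j$: by Line~\ref{alg-ndh:comb-construct}, $R_\ell$ is an initial key in $L_j$ only if it has an $s$-clique-adjacent $r$-clique of core number exactly $j$, so the number of eliminated keys at level $j$ may be much smaller than $c-1$, leaving part of the $O(c)$ iteration and $ID_j$-update cost unpaid. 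Second, the chosen representative $R_1$ may be \emph{newly inserted} as a key in $L_j$ during the concatenation and then eliminated in a later round $i'<i$, so it is not true that only initial keys are ever eliminated; your potential can be recharged without your accounting noticing. The caveat you flag at the end about ``folding in the $ID_j$ lookups and updates'' is exactly where the argument fails.

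The paper closes this gap with a different amortization: each merged $R_\ell$ (for $\ell\neq 1$) is charged the full $O(i)$ cost of iterating over all $j<i$. Since any vertex of $H$ at level $i$ satisfies $ND[R_\ell]\geq i$ (keys of $L_i$ do by construction on Line~\ref{alg-ndh:comb-construct}, and linked-list elements have core number exactly $i$), and since $R_\ell$ never reappears as a vertex of $H$ once merged (its entries are tombstoned and its $ID$ redirected), the total charge to $R_\ell$ over the whole algorithm is at most $ND[R_\ell]$. Summing over all $r$-cliques gives $\sum_R ND[R]\leq \binom{s}{r}\cdot(\text{number of }s\text{-cliques})=O(m\alpha^{s-2})$, since each $s$-clique contributes to at most $\binom{s}{r}$ core numbers. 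This per-clique, core-number-bounded charge is the missing ingredient in your plan.
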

\vspace{-3pt}
\begin{proof}
First, the theoretical complexity of computing the $(r, s)$-clique core numbers of each $r$-clique (Line~\ref{alg-ndh:init-nd}) is given by Shi \textit{et al.}~\cite{shi2021nucleus}, which they show takes $\BigO{m\alpha^{s-2}}$ work and $\BigO{\rho_{(r,s)}(G) \log n \allowbreak + \log^2 n}$ span \whp{} for constant $r$ and $s$. Note that the work bound is given by the version of their algorithm that takes space proportional to the number of $s$-cliques in $G$, which we incur regardless to store the hash tables $L_i$.
Additionally, iterating through all $s$-cliques in $G$ (Line~\ref{alg-ndh:start-init-hash}) is superseded by the work required to compute the $(r, s)$-clique core numbers.
For every pair of $r$-cliques in each $s$-cliques, we hash each $r$-clique and append to a linked list (Lines~\ref{alg-ndh:begin-comb}--\ref{alg-ndh:end-comb}), which in total takes 
$O(m\alpha^{s-2})$ work and $O(\log^2 n)$ span \whp{} for constant $r$ and $s$.

We now discuss the work and span of constructing the hierarchy tree $T$ level-by-level, in $k$ rounds (Lines~\ref{alg-ndh:start-big-loop}--\ref{alg-ndh:end-big-loop}). 
The key idea here is that the linked lists in each hash table $L_i$ are iterated over at most once across all rounds, and the concatenation of linked lists in intermediate rounds incurs minimal costs (since concatenating linked lists does not require iterating through the linked lists).
The sum of the lengths of the linked lists in each $L_i$ remains invariant throughout this portion of the construction, so in total, the cost of iterating over the linked lists is bounded by $O(m\alpha^{s-2})$ work, matching the work needed to construct each linked list in each $L_i$ originally. Also, the number of keys in each $L_i$ only monotonically decreases, so processing the connected components of the constructed subgraphs $H$ takes at most $O(m\alpha^{s-2})$ work as well.

In more detail, for fixed $i$, let the sum of the lengths of the linked lists in $L_i$ be $\ell_i$, and let the number of keys in $L_i$ be $y_i$.
For each round $i \in [k]$, applying $ID_i$ and parallel list ranking on the linked lists in $L_i$ (Lines~\ref{alg-ndh:map-id}--\ref{alg-ndh:list-rank}) takes work proportional to $\ell_i$ and $\BigO{\log n}$ span \whp{}. The number of edges in the subgraph $H$ constructed from the linked lists in $L_i$ is at most $\ell_i$, so performing connectivity on $H$ (Line~\ref{alg-ndh:connect}) takes work linear in $\ell_i$ and $\BigO{\log n}$ span \whp{}~\cite{Gazit1991}. 
Also, there are at most $\BigO{y_i}$ vertices in $H$, so processing each connected component and updating the hierarchy tree $T$ (Line~\ref{alg-ndh:new-parent}) takes $\BigO{y_i}$ work and $\BigO{1}$ span. In total, for Lines~\ref{alg-ndh:map-id}--\ref{alg-ndh:new-parent}, we incur $\BigO{\sum_i (\ell_i + y_i) } = \BigO{m\alpha^{s-2} }$ expected work and $\BigO{k \log n}$ span \whp{}

It remains to bound the cost of of the loop in Lines~\ref{alg-ndh:start-ji-loop}--\ref{alg-ndh:end-ji-loop}.
First, note that across all rounds, each value corresponding to every key in each $L_i$ is concatenated at most once (Line~\ref{alg-ndh:concat}). 
This is because we concatenate linked lists and store the concatenation under exactly one existing key. 
We empty the corresponding values for the previously associated keys, which allows us to maintain the invariant that the sum of the lengths of the linked lists in each $L_i$ remains fixed.
Also, the previously associated keys are never used again, since we update the value associated with each $r$-clique in $ID_j$. 
Thus, the concatenations are bounded by $O(m\alpha^{s-2})$ work across all rounds, matching the work of constructing all $L_i$. 
Additionally, iterating through all levels $j \leq i$ for each component $\mathcal{C}$ does not increase the work. This is because we only reach this loop if $|\mathcal{C}| \geq 2$, which means that we are effectively merging multiple $r$-cliques together in each hash table $L_j$ for $j \leq i$, and assigning the $r$-clique $R_1$ as the new representative for the other $r$-cliques in the component (updating $ID_j$).
Thus, we can assign the work of iterating through all $j \leq i$ to the $r$-cliques that are being merged ($R_\ell$ where $\ell \neq 1$). 
Once the $r$-clique $R_\ell$ is merged, due to the concatenation on Line~\ref{alg-ndh:concat} and the update in $ID_j$ on Line~\ref{alg-ndh:update-id}, it never participates as a vertex in $H$ again in future rounds, so is never re-processed in a future connected component of $H$; this is due to the mapping on Line~\ref{alg-ndh:map-id}. 
The amount of work that we assign per merged $r$-clique $R_\ell$ is at most the core number of $R_\ell$. This is because the amount of work we assign to $R_\ell$ is given by the number of rounds in which we merge, or $i$, and $R_\ell$ only appears in $L_i$ if $ND[R_\ell] \geq i$, by construction on Line~\ref{alg-ndh:comb-construct}. 
Thus, in total, for each $r$-clique, we incur work upper bounded by the core number.
We have in total $\BigO{\sum_{r\text{-clique }R \in G} ND[R]} = \BigO{m\alpha^{s-2}}$ work, since the sum of the $(r, s)$-clique core numbers in $G$ is necessarily bounded by the number of $s$-cliques for constant $r$ and $s$. This is because each $s$-clique contributes to at most $\binom{s}{r}$ $r$-clique's core numbers, so the summation across all core numbers is upper bounded by $\binom{s}{r} \cdot $ (the number of $s$-cliques).
Thus, in total, the loop in Lines~\ref{alg-ndh:start-ji-loop}--\ref{alg-ndh:end-ji-loop} incurs $\BigO{m\alpha^{s-2}}$ work and $\BigO{k \log n}$ span, where the span is due to list ranking.

In total, we have $\BigO{m\alpha^{s-2} }$ expected work and $\BigO{k \log n \allowbreak +\allowbreak \rho_{(r,s)}(G) \log n \allowbreak +\allowbreak \log^2 n}$ span \whp{}, as desired.
\end{proof}

\myparagraph{Comparison to Prior Work} The prior state-of-the-art algorithm is the sequential $(r, s)$ nucleus decomposition hierarchy algorithm by \sariyuce{} and Pinar~\cite{SaPi16}. 
They provide an algorithm similar to \ourndh{} in that it first computes the $(r, s)$-clique core numbers of each $r$-clique, and then builds the hierarchy tree from the bottom up. 
They show that the time complexity is upper bounded by the time complexity of computing the $(r, s)$-clique core numbers.
Note that they omit a factor of $\BigO{\alpha(n_s,n_r)}$ where $\alpha$ is the inverse Ackermann function and $n_s$ and $n_r$ are the number of $s$-cliques and $r$-cliques, respectively, in the graph; this factor is necessary for their algorithm, since they use union-find to construct the hierarchy tree. 
Thus, the time complexity of \sariyuce{} and Pinar's algorithm is $\BigO{m\alpha^{s-2} \alpha(n_s,n_r)}$ (this is achieved using the state-of-the-art algorithm for computing the $(r, s)$-clique core numbers~\cite{shi2021nucleus}).
Our \ourndh{} algorithm avoids the additional inverse Ackermann factor, by efficiently constructing graphs to represent different levels of the hierarchy tree and using linear-work graph connectivity to construct the hierarchy tree.
Thus, we improve the sequential running time of constructing the $(r, s)$ nucleus decomposition hierarchy to $\BigO{m\alpha^{s-2}}$, and \ourndh{} is work-efficient.

Shi et al.~\cite{shi2021nucleus} present a parallel algorithm to compute the $(r,s)$-clique core numbers in $O(m\alpha^{s-2})$ expected work.
A vanilla extension of their algorithm to compute the hierarchy would be to run connected components for each non-empty core number $i$, with vertices being the $(r,s)$-cliques that have core number $i$ or above, in a top-down manner. This incurs $O(m\alpha^{s-2})$ expected work per level, and there can be $\rho_{(r,s)}(G)$ levels in the worst case, leading to $O(\rho_{(r,s)}(G)m\alpha^{s-2})$ total expected work. In contrast, our algorithm has a much lower total expected work of $O(m\alpha^{s-2})$.

\section{Approximate Nucleus Decomposition}\label{sec:approx}

Given the potentially large span (i.e., longest critical path) of computing the exact nucleus decomposition hierarchy, we develop a new parallel approximate nucleus decomposition hierarchy algorithm, \ourapprox{}. 
Instead of computing exact $(r, s)$-clique-core numbers of each $r$-clique, the main idea of the new algorithm is to compute an approximation of the $(r, s)$-clique core number. 
Specifically, we compute a $(\binom{s}{r} + \varepsilon)$-approximate $(r, s)$-clique core number of each $r$-clique; that is to say, if we let the true $(r,s)$-clique core number of each $r$-clique be $k_R$, our approximation is at least $k_R$ and at most $(\binom{s}{r} + \varepsilon) \cdot k_R$.

Our approximate computation uses the same peeling paradigm from Shi \textit{et al.}~\cite{shi2021nucleus} for exact nucleus decomposition, but with an important modification that allows it to take only $\BigO{\log^2 n}$ peeling rounds, thus significantly improving upon the span. 
As a result, we only have polylogarithmically many core numbers, leading to a hierarchy tree with polylogarithmic height. 
Our hierarchy construction for \ourapprox{} is exactly the same as that of \ourndh{}, and the only salient difference is that for \ourapprox{}, we replace the \ournd{} subroutine in Line~\ref{alg-ndh:init-nd} of Algorithm~\ref{alg:ndh} with our approximate nucleus decomposition subroutine, \ourapproxnd{}.

\myparagraph{Our Algorithm} 
We present our pseudocode for \ourapproxnd{} in Algorithm~\ref{alg:ndh-approx}. 
Again, we have highlighted in blue the parts that derive directly from prior work, and the remainder is novel to this work.
Note that it takes as input a parameter $\delta > 0$, which controls the $\varepsilon$ in the $(\binom{s}{r} + \varepsilon)$-approximation. Specifically, \ourapproxnd{} gives a $(\binom{s}{r} + \delta) \cdot (1 + \delta ) = (\binom{s}{r}+\varepsilon)$-approximation, which we prove in Theorem~\ref{thm:approx}.

First, on Line~\ref{alg-approxnd:orient}, \ourapproxnd{} computes a low out-degree orientation of the graph $G$, which directs the edges such that every vertex has out-degree at most $\BigO{\alpha}$, using an efficient algorithm by Shi \textit{et al.}~\cite{shi2020parallel}. Then, on Lines~\ref{alg-approxnd:count}--\ref{alg-approxnd:count-end}, it counts the number of $s$-cliques per $r$-clique in $G$, and stores the result in a parallel hash table $U$. It uses an $s$-clique counting subroutine, \ourcountrec{}, from Shi \textit{et al.}'s previous work~\cite{shi2020parallel}.
The key difference between \ourapproxnd{} and \ournd{} is on Line~\ref{alg-approxnd:buckets}, where the buckets in the bucketing structure hold $r$-cliques with a range of $s$-clique-degrees instead of a single $s$-clique-degree. 
Specifically, for an input parameter $\delta$, we define the range of each bucket $B_i$ to be $[(\binom{s}{r} + \delta) \cdot (1+\delta)^i , (\binom{s}{r} + \delta)\cdot (1+\delta)^{i+1}]$, where $i \in [s \log_{1 + \delta} n]$ since $\binom{n}{s} = \BigO{n^s}$ is a trivial upper bound on the maximum $s$-clique-degree possible in any given graph.

The peeling algorithm then proceeds as it does in~\cite{shi2021nucleus}, except using our modified bucketing structure. 
While not all $r$-cliques have been peeled, \ourapproxnd{} processes the set of $r$-cliques $A$ (that have not yet been peeled) within the lowest bucket $B_i$ (starting with $i = 0$), and peels them from the graph (Lines~\ref{alg-approxnd:start-peel}--\ref{alg-approxnd:end-peel}).
For each $r$-clique $R$ in $A$, we iterate over all $s$-clique-adjacent $r$-cliques $R'$, and update the recorded $s$-clique-degree of $R'$ given $R$'s removal (Lines~\ref{alg-approxnd:start-count}--\ref{alg-approxnd:end-count}). 
Then, we peel (remove) the $r$-cliques in $A$ from the graph and update the buckets of all unpeeled $r$-cliques based on the updated $s$-clique-degrees on Line~\ref{alg-approxnd:update-bucket}. Notably, if a $r$-clique $R$'s $s$-clique-degree falls below the range of the current bucket of $r$-cliques that is being peeled, we do not rebucket $R$ into a lower bucket, and instead aggregate these $r$-cliques within the current bucket. 
As such, in any given round of peeling, we are actually peeling all $r$-cliques with $s$-clique-degree $\leq (\binom{s}{r} + \delta) \cdot(1+\delta)^{i}$, which is important for our theoretical bounds.
Note that we process a given bucket $B_i$ at most $\BigO{\log_{1+\slfrac{\delta}{\binom{s}{r}}}(n)}$ times; if we have exceeded this threshold, or if $B_i$ is empty, then we move on to processing the next bucket, $B_{i+1}$ (Lines~\ref{alg-approxnd:bound-rounds}--\ref{alg-approxnd:end-bound-rounds}). If there are unpeeled $r$-cliques remaining in $B_i$ once we reach this threshold, we include them in the next bucket $B_{i+1}$ (Line~\ref{alg-approxnd:carry-over}).
Note that the approximate $(r, s)$-clique core number that we compute for each $r$-clique is given by the upper bound of the bucket in which it was peeled. In practice, we can improve this by taking the minimum of the upper bound of the bucket, and the $s$-clique-degree of each $r$-clique (in the original graph).

Once we have peeled all $r$-cliques, this concludes our subroutine. \ourapprox{} is then given by replacing \ournd{} in Line~\ref{alg-ndh:init-nd} of Algorithm~\ref{alg:ndh} with \ourapproxnd{}.

\begin{algorithm}[!t]
  \footnotesize
 \begin{algorithmic}[1]
 \State Initialize $r$, $s$ \Comment{$r$ and $s$ for $(r,s)$ nucleus decomposition}
\Procedure {\ourapproxnd{}}{$G = (V,E)$, $\delta$}
\State \algblue{$DG \leftarrow$ \algname{Arb-Orient}($G$)} \Comment{Apply an arboricity-orientation algorithm} \label{alg-approxnd:orient}
\State Initialize $U$ to be a parallel hash table with $r$-cliques as keys, and $s$-clique counts as values \label{alg-approxnd:count}
\State \algblue{\ourcountrec{}($DG$, $s$, $U$)} \Comment{Count $s$-cliques, and store the counts per $r$-clique in $U$} \label{alg-approxnd:count-end}
\State Let $ND$ be a bucketing structure mapping each $r$-clique to a bucket based on \# of $s$-cliques, where each bucket $B_i$ contain all $r$-cliques with $s$-clique-degree in the range $[(\binom{s}{r} + \delta) \cdot (1 + \delta)^i , (\binom{s}{r} + \delta)\cdot (1 + \delta)^{i+1}]$, for all $i \in [s \log_{\binom{s}{r} + \delta} n]$\label{alg-approxnd:buckets}
\State $\mathsf{finished} \leftarrow 0$, $\mathsf{num\_rounds} \leftarrow 0$, $i \leftarrow 0$
\While{$\mathsf{finished} < |U|$} \label{alg-approxnd:start-peel}
\State $A \leftarrow$ $r$-cliques in the bucket $B_i$ in $ND$ (to be peeled)
\State $\mathsf{finished} \leftarrow \mathsf{finished} + |A|$
\State $\mathsf{num\_rounds} \leftarrow \mathsf{num\_rounds} + 1$
\ParFor{all $r$-cliques $R$ in $A$}  \label{alg-approxnd:start-count}
  \ParFor{\algblue{all $s$-cliques $S$ containing $R$}}
    \ParFor{all $r$-cliques $R'$ in $S$ where $R' \neq R$} 
        \State Update $s$-clique count of $R'$ in $U$
    \EndParFor
  \EndParFor
\EndParFor  \label{alg-approxnd:end-count}
\State Peel $A$, assign the 
upper bound of the bucket as the coreness value for peeled $r$-cliques,
and update the buckets of $r$-cliques with updated $s$-clique counts\label{alg-approxnd:update-bucket}
\If{$\mathsf{num\_rounds} \geq \BigO{\log_{1+\slfrac{\delta}{\binom{s}{r}}}(n)}$ or $B_i$ is empty} \label{alg-approxnd:bound-rounds}
  \State Add the remaining $r$-cliques in $B_i$ (if it is non-empty) to $B_{i+1}$ \label{alg-approxnd:carry-over}
  \State $i \leftarrow i + 1$
  \State $\mathsf{num\_rounds} \leftarrow 0$
\EndIf \label{alg-approxnd:end-bound-rounds}
\EndWhile \label{alg-approxnd:end-peel}
\State \Return $ND$
  \EndProcedure
 \end{algorithmic}
\caption{Approximate parallel $(r,s)$ nucleus algorithm}
 \label{alg:ndh-approx}
\end{algorithm}

\myparagraph{Theoretical Guarantees and Efficiency} We now discuss the theoretical guarantees and theoretical efficiency of \ourapproxnd{}, and by extension, \ourapprox{}. 

\iffull
We introduce the following lemmas to help prove that \ourapproxnd{} guarantees a $(\binom{s}{r} + \varepsilon)$-approximation of the true $(r, s)$-clique core numbers of each $r$-clique.

In particular, Lemma~\ref{lemma:1} bounds the number of $r$-cliques with $(r, s)$-clique core numbers $\leq \ell$ for a fixed $\ell$. 
We use this to prove Lemma~\ref{lemma:2}, which bounds the proportion of $r$-cliques with core numbers $\leq \ell$, but with $s$-clique-degree $> \ell (\binom{s}{r} + \delta)$. 
We can then set $\ell$ such that at any given step of our peeling process, we obtain a bound on the maximum proportion of $r$-cliques with core number at most $\ell$ that is not within the current bucket to be peeled. 
In essence, this gives us a bound on the number of times that a bucket must be reprocessed, such that moving on to the next bucket does not degrade the approximation factor, which gives us our approximation guarantees.

Note that Lemmas~\ref{lemma:1} and~\ref{lemma:2} apply to any stage of the peeling process in \ourapproxnd{}; that is to say, even if we have peeled a set of $r$-cliques from the graph $G$, the lemmas hold true for the remaining unpeeled $r$-cliques in $G$, with the updated $s$-clique-degrees 
(which are the original $s$-clique-degrees minus the number of incident $s$-cliques that have been removed from the graph due to the set of peeled $r$-cliques).

\begin{lemma} \label{lemma:1}
Let $S_\ell$ be the set of remaining, or unpeeled, $r$-cliques with $(r, s)$-clique core numbers $\leq \ell$, considering an arbitrary fixed stage in the peeling process. Then, the number of $s$-cliques incident to $S_\ell$ is $\leq \ell \cdot |S_\ell|$.
\end{lemma}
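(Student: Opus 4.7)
The plan is to prove the lemma via a charging argument that leverages the standard characterization of $(r,s)$-clique core numbers through the exact peeling order on the original graph $G$. Recall the key fact: if one fixes any valid exact peeling order $\pi$ of the $r$-cliques of $G$ (with arbitrary tie-breaking), then $\pi(R_1)<\pi(R_2)\Rightarrow k_{R_1}\leq k_{R_2}$, and the $s$-clique-degree of each $r$-clique $R$ at the instant it is peeled is exactly $k_R$. Equivalently, if we partition the $s$-cliques of $G$ by assigning each $s$-clique $S$ to the $r$-clique $R^*(S)\in S$ with the smallest $\pi$-value, then each $r$-clique $R$ is assigned exactly $k_R$ $s$-cliques.

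Given this, I would proceed as follows. Fix the current peeling stage and let $\mathcal{T}_\ell$ denote the set of $s$-cliques still present in the graph that are incident to at least one $r$-clique in $S_\ell$. For each $S\in\mathcal{T}_\ell$, charge $S$ to $R^*(S)$. The critical step is to argue that $R^*(S)\in S_\ell$. Since $S$ is still present, none of its constituent $r$-cliques have been peeled, so in particular $R^*(S)$ is unpeeled. Moreover, $S$ contains some $R\in S_\ell$ with $k_R\leq\ell$, and the min-$\pi$ property of $R^*(S)$ together with the monotonicity of $\pi$ gives $k_{R^*(S)}\leq k_R\leq\ell$. Hence $R^*(S)\in S_\ell$.

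Summing the charges, each $R\in S_\ell$ receives at most $k_R\leq\ell$ charges, since the total number of $s$-cliques of $G$ (whether still present or already removed) mapped to $R$ under the partition is exactly $k_R$. Therefore $|\mathcal{T}_\ell|\leq \sum_{R\in S_\ell} k_R\leq \ell\cdot|S_\ell|$, which is the desired bound. I do not expect serious obstacles in formalizing this; the only subtlety is that the charging uses the exact peeling order on $G$, which need not coincide with the order actually used by \ourapproxnd{}, but the argument only depends on which $r$-cliques are currently unpeeled, not on how they became unpeeled, so it applies uniformly to any stage of the peeling process.
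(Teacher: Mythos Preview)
Your proposal is correct and follows essentially the same charging argument as the paper: both assign each surviving $s$-clique incident to $S_\ell$ to the $r$-clique in it that is first in the exact peeling order, use monotonicity of core numbers along that order to place this $r$-clique in $S_\ell$, and bound the charge to each $R$ by its residual $s$-clique-degree at peel time, which is at most $k_R\le\ell$. One small correction: the $s$-clique-degree of $R$ at the instant it is peeled is \emph{at most} $k_R$, not exactly $k_R$ (think of the center of a star, which is peeled with degree~$0$ but has core number~$1$); your argument only uses the inequality, so it goes through unchanged.
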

\begin{proof}
We prove this by considering the exact $(r, s)$ nucleus decomposition algorithm given by \sariyuce{} \textit{et al.}~\cite{Sariyuce2017}. In their algorithm, at each step, an $r$-clique with the minimum $s$-clique-degree in the graph is peeled.
In peeling an $r$-clique $R$, for every $s$-clique $S$ that $R$ participated in, the algorithm decrements the $s$-clique-degree of the remaining $r$-cliques $R' \in S$ that have not yet been peeled.
The $(r, s)$-clique core number of each $r$-clique $R$ is given by the maximum $k_R$ such that at some point before $R$ was peeled in this algorithm, all $r$-cliques that were not yet peeled had $s$-clique-degree at least $k_R$. 
In order for a given $r$-clique $R$ to have $(r, s)$-clique core number $k_R$, it must have had $s$-clique-degree at most $k_R$ when it was peeled from the graph. 
If $R$ has $s$-clique-degree greater than $k_R$ when it was peeled, since $R$ must have had the minimum $s$-clique-degree when it was peeled, this would mean that before $R$ was peeled, every $r$-clique had $s$-clique-degree greater than $k_R$, so by definition, $R$'s core number must be greater than $k_R$, which is a contradiction.

Note that \sariyuce{} \textit{et al.}~\cite{Sariyuce2017} prove that considering the $r$-cliques in the order in which they are peeled, the $(r, s)$-clique core numbers of the $r$-cliques are monotonically increasing.

Using these observations, the set $S_\ell$ of $r$-cliques is given by a contiguous sequence of $r$-cliques peeled in this algorithm, where upon being peeled, each $r$-clique has induced $s$-clique-degree at most $\ell$. As such, we can assign each $s$-clique $S$ incident to $S_\ell$ to the $r$-clique $R$ in which $S$ contributed to $R$'s induced $s$-clique-degree when $R$ was peeled.
As a result, the total number of $s$-cliques incident to $S_\ell$ must be $\leq \ell \cdot |S_\ell|$, since the induced $s$-clique-degree of each such $r$-clique when it was peeled is upper bounded by $\ell$.
\end{proof}

\begin{lemma} \label{lemma:2}
Let $F_\ell$ be the set of unpeeled, or remaining, $r$-cliques with $s$-clique-degree $> \ell (\binom{s}{r} + \delta)$ and with $(r, s)$-clique core number $\leq \ell$, considering an arbitrary fixed stage in the peeling process. Then, $|F_\ell| \leq \slfrac{\binom{s}{r} \cdot |S_\ell|}{(\binom{s}{r} + \delta)}$.
\end{lemma}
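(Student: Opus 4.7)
The plan is to leverage Lemma~\ref{lemma:1} by double-counting the $(r\text{-clique}, s\text{-clique})$ incidences restricted to $F_\ell$, and to bound this quantity both from below (via the degree hypothesis defining $F_\ell$) and from above (via Lemma~\ref{lemma:1} combined with the observation that each $s$-clique covers only $\binom{s}{r}$ $r$-cliques).

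First I would note that by definition every $r$-clique in $F_\ell$ has $(r,s)$-clique core number $\leq \ell$, so $F_\ell \subseteq S_\ell$; in particular any $s$-clique incident to an $r$-clique in $F_\ell$ is also incident to $S_\ell$. Let $\mathcal{S}(F_\ell)$ and $\mathcal{S}(S_\ell)$ denote the sets of (unpeeled) $s$-cliques incident to $F_\ell$ and $S_\ell$ respectively, so $\mathcal{S}(F_\ell) \subseteq \mathcal{S}(S_\ell)$. By Lemma~\ref{lemma:1}, $|\mathcal{S}(S_\ell)| \leq \ell \cdot |S_\ell|$.

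Next I would lower-bound the total $s$-clique-degree of $F_\ell$. Since each $R \in F_\ell$ has current $s$-clique-degree strictly greater than $\ell(\binom{s}{r} + \delta)$,
\[
    \sum_{R \in F_\ell} \deg_s(R) > |F_\ell| \cdot \ell \left(\binom{s}{r} + \delta\right).
\]
On the other hand, each $s$-clique contains exactly $\binom{s}{r}$ distinct $r$-cliques, so in the sum $\sum_{R \in F_\ell} \deg_s(R)$ each $s$-clique in $\mathcal{S}(F_\ell)$ is counted at most $\binom{s}{r}$ times, giving
\[
    \sum_{R \in F_\ell} \deg_s(R) \leq \binom{s}{r} \cdot |\mathcal{S}(F_\ell)| \leq \binom{s}{r} \cdot |\mathcal{S}(S_\ell)| \leq \binom{s}{r} \cdot \ell \cdot |S_\ell|.
\]
Chaining the two inequalities and dividing by $\ell(\binom{s}{r} + \delta)$ yields $|F_\ell| \leq \binom{s}{r} |S_\ell| / (\binom{s}{r} + \delta)$, as claimed.

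The only subtle point, and the one I would check most carefully, is that Lemma~\ref{lemma:1} really applies to the intermediate peeling stage we are in: the bound there concerns the $s$-cliques currently incident to unpeeled $r$-cliques in $S_\ell$, measured with the updated degrees, and this is exactly the quantity being upper-bounded in the second chain. Once this is clear, the argument is essentially a one-line double-counting, and the hardest conceptual step is just recognizing that $F_\ell \subseteq S_\ell$ so that the Lemma~\ref{lemma:1} bound is directly usable.
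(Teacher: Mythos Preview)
Your proof is correct and follows essentially the same double-counting argument as the paper: observe $F_\ell \subseteq S_\ell$, use Lemma~\ref{lemma:1} to bound the number of $s$-cliques incident to $S_\ell$ by $\ell\,|S_\ell|$, multiply by $\binom{s}{r}$ to bound the total $s$-clique-degree, and compare with the lower bound $|F_\ell|\cdot\ell(\binom{s}{r}+\delta)$ coming from the definition of $F_\ell$. The only cosmetic difference is that the paper sums degrees over all of $S_\ell$ rather than restricting to $F_\ell$ and invoking $\mathcal{S}(F_\ell)\subseteq\mathcal{S}(S_\ell)$, but the two routes are equivalent.
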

\begin{proof}
Let $S_\ell$ be the set of unpeeled $r$-cliques with $(r,s)$-clique core numbers $\leq \ell$. First, note that $F_\ell \subseteq S_\ell$ by definition. By Lemma~\ref{lemma:1}, we know that at most $\ell \cdot |S_\ell|$ $s$-cliques are incident to $S_\ell$. Thus, the sum of the $s$-clique-degrees of the $r$-cliques in $S_\ell$ is at most $\binom{s}{r} \cdot \ell \cdot |S_\ell|$, since each $s$-clique can contribute to the $s$-clique-degree of at most $\binom{s}{r}$ $r$-cliques. Then, since each $r$-clique in $F_\ell$ has $s$-clique-degree $> \ell(\binom{s}{r} + \delta)$ by definition, the maximum number of $r$-cliques in $F_\ell$ is $\slfrac{\binom{s}{r} \cdot \ell \cdot |S_\ell|}{(\ell(\binom{s}{r} + \delta))} = \slfrac{\binom{s}{r} \cdot |S_\ell|}{(\binom{s}{r} + \delta)}$, as desired.
\end{proof}

\else

The main idea is to 
bound the proportion of $r$-cliques with core numbers $\leq \ell$ for some fixed $\ell$, but with $s$-clique-degree $> \ell (\binom{s}{r} + \delta)$. 
We can then set $\ell$ such that at any given step of our peeling process, we obtain a bound on the maximum proportion of $r$-cliques with core number at most $\ell$ that is not within the current bucket to be peeled. 
In essence, this gives us a bound on the number of times that a bucket must be reprocessed, such that moving on to the next bucket does not degrade the approximation factor, which gives us our approximation guarantees. 
Due to space constraints, we defer the proof of the following theorem to the full paper~\cite{full}.

\fi

\begin{theorem}\label{thm:approx}
  \ourapprox{} computes a $(\binom{s}{r} + \varepsilon)$-approximate $(r,s)$ nucleus decomposition hierarchy in $\BigO{m\alpha^{s-2} }$ expected work and $\BigO{ \log^3 n}$ span \whp{}
\end{theorem}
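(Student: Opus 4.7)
The plan is to separately establish the approximation guarantee, the expected work bound, and the high-probability span bound, leveraging the analysis of the exact nucleus decomposition algorithm of Shi et al.~\cite{shi2021nucleus} and the combinatorial bounds in Lemmas~\ref{lemma:1} and~\ref{lemma:2}.

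For the approximation guarantee, I will set $\ell = (1+\delta)^{i+1}$ so that $\ell(\binom{s}{r}+\delta)$ equals the upper bound $U_i$ of bucket $B_i$. Lemma~\ref{lemma:2} then tells me that, among the currently unpeeled $r$-cliques with true core number at most $(1+\delta)^{i+1}$, the fraction whose current $s$-clique-degree exceeds $U_i$ is at most $\binom{s}{r}/(\binom{s}{r}+\delta)$. Every $r$-clique outside this ``stuck'' set has current degree at most $U_i$, hence sits in (or has been aggregated into) $B_i$, and is therefore peeled in the current round. So each round shrinks the set of low-core unpeeled $r$-cliques by the factor $1/(1+\delta/\binom{s}{r})$; iterating over the $\BigO{\log_{1+\delta/\binom{s}{r}} n}$ rounds allotted to $B_i$ on Line~\ref{alg-approxnd:bound-rounds} eliminates all $r$-cliques with true core number at most $(1+\delta)^{i+1}$ before the algorithm advances to $B_{i+1}$. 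This gives the key invariant: any $r$-clique peeled in bucket $B_i$ must have $k_R > (1+\delta)^i$, so $\hat k_R / k_R = U_i / k_R < (\binom{s}{r}+\delta)(1+\delta) = \binom{s}{r}+\varepsilon$ by the definition of $\varepsilon$ in terms of $\delta$. For the matching lower bound $\hat k_R \geq k_R$, I will use the classical nucleus characterization: $R$ lies in some $k_R$-nucleus $N$, and the first $r$-clique of $N$ to be peeled is peeled in some bucket $B_{j^*}$ with $j^* \leq i$ (since $R \in N$ is itself peeled in $B_i$); at the moment that first $r$-clique of $N$ is peeled, all of $N$ is still intact, so its current degree is at least $k_R$ and at most $U_{j^*} \leq U_i$, giving $k_R \leq U_i = \hat k_R$.

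For the work bound, I will inherit the accounting from the exact algorithm: the arboricity orientation takes $\BigO{m}$ work, the initial $s$-clique counting takes $\BigO{m\alpha^{s-2}}$ work, and the peeling loop performs total work proportional to enumerating each incident $s$-clique once and updating the corresponding $r$-clique degrees, summing to $\BigO{m\alpha^{s-2}}$ expected work; replacing singleton degree buckets with exponential ranges and carrying residuals across buckets changes only constants, and the hierarchy construction inherits its $\BigO{m\alpha^{s-2}}$ work bound from Theorem~\ref{thm:ndh}. For the span, the number of buckets is $\BigO{\log n}$ for constant $r$, $s$, $\delta$; each bucket is processed for at most $\BigO{\log n}$ rounds; and each round performs hash-table and bucket updates in $\BigO{\log n}$ span \whp{}, reusing the $s$-clique incidence structure built once at the start. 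This yields $\BigO{\log^3 n}$ span \whp{} for \ourapproxnd{}. When this subroutine is substituted into \ourndh{}, the number of distinct approximate core values is $\BigO{\log n}$ (so the level loop effectively iterates $\BigO{\log n}$ times) and the approximate peeling complexity is $\rho_{(r,s)}(G) = \BigO{\log^2 n}$; plugging these into the span expression from Theorem~\ref{thm:ndh} gives hierarchy span $\BigO{\log n \cdot \log n + \log^2 n \cdot \log n + \log^2 n} = \BigO{\log^3 n}$ \whp{}.

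The main obstacle I anticipate is handling the aggregation rule on Line~\ref{alg-approxnd:update-bucket} cleanly: because $r$-cliques whose degree drops below the current bucket's lower bound are kept in $B_i$ instead of being rebucketed, the bucket index of a peeled $r$-clique is no longer tightly tied to its current degree at the moment of peeling, and both directions of the approximation bound must be justified against this looser invariant. I expect the argument sketched above to be robust, but the formal proof will need to quantify carefully over rounds and over residuals carried from higher buckets, to ensure the per-round shrinkage factor from Lemma~\ref{lemma:2} applies uniformly in every round, including those that process residuals inherited when previous buckets exhausted their round budget.
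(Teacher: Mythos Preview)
Your proposal is correct and follows essentially the same approach as the paper: the approximation guarantee via induction on the bucket index using Lemma~\ref{lemma:2} with $\ell=(1+\delta)^{i+1}$ to get the geometric shrinkage of $S_\ell$, the work bound inherited from \ournd{} and Theorem~\ref{thm:ndh}, and the span bound from counting $\BigO{\log n}$ buckets times $\BigO{\log n}$ rounds per bucket times $\BigO{\log n}$ span per round. Your treatment is in fact slightly more complete than the paper's: you explicitly prove the lower bound $\hat k_R \ge k_R$ via the nucleus argument (the paper asserts but does not justify that r-cliques peeled while processing $B_i$ have true core number $\le (1+\delta)^{i+1}$), and you explicitly derive the hierarchy-construction span by plugging the $\BigO{\log n}$ distinct approximate levels and $\BigO{\log^2 n}$ peeling complexity into Theorem~\ref{thm:ndh}, whereas the paper simply asserts the final $\BigO{\log^3 n}$ bound for \ourapprox{}.
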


\iffull
\else
Note that in the theorem above, $\epsilon$ does not affect the work bound, and affects the base of one of the logarithms in the span bound (which we omitted as we treat $\epsilon$ as a constant).
\fi

\iffull

\begin{proof}

Note that the work bound for \ourapprox{} follows directly from that for \ournd{}~\cite{shi2021nucleus} and \ourndh{}. This is because the only difference between \ourapproxnd{} and \ournd{} is the boundaries used in the bucketing structure, which affects the number of $r$-cliques that are peeled in any given round, but does not affect the amount of work needed to rediscover $s$-cliques containing peeled $r$-cliques. The latter dominates the work of both \ourapproxnd{} and \ournd{}, so as a result, the work of \ourapproxnd{} is precisely the work of \ournd{}. To be more specific, as given by~\cite{shi2021nucleus}, \ourapproxnd{} takes $\BigO{m\alpha^{s-2} }$ work \whp{} assuming space proportional to the number of $s$-cliques.\footnote{
As an aside, following the arguments in~\cite{shi2021nucleus}, if we instead restrict our space usage to be proportional to the number of $r$-cliques, we can modify the bucketing structure to use a batch-parallel Fibonacci heap~\cite{Shi2020}, which would increase the work bound to $\BigO{m\alpha^{s-2} + \log^3 n}$ amortized expected work \whp{}}
Additionally, \ourapprox{} is precisely \ourndh{}, except using \ourapproxnd{} to compute the $(r, s)$-clique core numbers. Thus, following the proof of Theorem~\ref{thm:ndh}, we see that \ourapprox{} overall takes $\BigO{m\alpha^{s-2} }$ expected work \whp{}
The work bound is unaffected by the choice of $\epsilon$.

For our span bound, first note that there are only a logarithmic number of possible $i$ values (since $\binom{n}{s} = \BigO{n^s}$ upper bounds the maximum possible $s$-clique-degree). For each bucket $B_i$, we by construction process $B_i$ at most $\BigO{\log_{1+\slfrac{\delta}{\binom{s}{r}}}(n)}$ times (Line~\ref{alg-approxnd:bound-rounds}). 
Thus, in total, we have $\BigO{\log^2 n}$ rounds of peeling in \ourapproxnd{}. 
The span of each peeling round is $\BigO{\log n}$ \whp{} to retrieve the next bucket of $s$-cliques and to perform hash table operations to update the $s$-clique counts, as discussed in more detail by Shi \textit{et al.}~\cite{shi2021nucleus}. Therefore, the total span of peeling is $\BigO{\log^3 n}$ span \whp{}
Our choice of $\epsilon$ later will affect the base of one of the logarithms. 

The work and span of orienting the graph and counting the number of $s$-cliques follows from Shi \textit{et al.}'s $s$-clique enumeration algorithm~\cite{shi2020parallel}, which takes $\BigO{m\alpha^{s-2}}$ expected work and $\BigO{\log^2 n}$ span \whp{} Thus, in total, \ourapprox{} takes $\BigO{m\alpha^{s-2} }$ expected work and $\BigO{ \log^3 n}$ span \whp{}, as desired.

It remains to argue that \ourapproxnd{} computes an $(\binom{s}{r} + \varepsilon)$-approximate $(r, s)$-clique core number for every $r$-clique. 
Our proof here follows arguments by Ghaffari \textit{et al.}~\cite{Ghaffari2019}, in their approximate $k$-core decomposition algorithm. We generalize their arguments to apply to $(r, s)$ nucleus decomposition.

We prove this using induction on $i$, considering each bucket $B_i$. 
Our inductive hypothesis is that before beginning to peel $B_i$, we have already peeled all $r$-cliques with $(r, s)$-clique core numbers $\leq (1+\delta)^i$.
We show here that after a logarithmic number of rounds peeling all $r$-cliques in $B_i$, we have necessarily peeled all $r$-cliques with $(r, s)$-clique core numbers $\leq (1+\delta)^{i+1}$.
The key to this argument is Lemma~\ref{lemma:2}, where we set $\ell = (1+\delta)^{i+1}$. 
Notably, all $r$-cliques in $B_i$ have $s$-clique-degree at most $(\binom{s}{r} + \delta)\cdot (1 + \delta)^{i+1} = (\binom{s}{r} + \delta)\cdot \ell$, by construction of $B_i$.
Thus, Lemma~\ref{lemma:2} bounds the number of unpeeled $r$-cliques outside of the bucket $B_i$ (that is to say, in a bucket $B_j$ where $j > i$), but with $(r, s)$-clique core number $\leq \ell$ (these $r$-cliques are precisely those in $F_\ell$). 
Specifically, after each round of peeling $B_i$, at most $\slfrac{\binom{s}{r} \cdot |S_\ell|}{(\binom{s}{r} + \delta)}$ $r$-cliques remaining outside of $B_i$ have core number $\leq \ell$, so it takes $\BigO{\log_{1 + \delta / \binom{s}{r}}(n)}$ rounds until no $r$-cliques with core number $\leq \ell$ are outside of $B_i$ (that is to say, $F_\ell$ is empty). 
This means that after a logarithmic number of rounds of peeling all $r$-cliques in $B_i$, we have necessarily peeled all $r$-cliques with core number $\leq \ell$. 

Now, in our algorithm, we assign the approximate core number of these peeled $r$-cliques to be the upper boundary of $B_i$, or $(\binom{s}{r} + \delta)\cdot (1 + \delta)^{i+1}$. 
Note that the core numbers of the $r$-cliques that we peel while processing $B_i$ are $> (1+\delta)^i$ (by our inductive hypothesis) and $\leq \ell =  (1+\delta)^{i+1}$. Thus, our approximation is within a $(\binom{s}{r} + \delta)\cdot (1 + \delta) = (\binom{s}{r} + \varepsilon)$ factor of the true core number. 
Thus, \ourapproxnd{} gives a $(\binom{s}{r} + \varepsilon)$-approximation of the true core numbers of each $r$-clique.

\end{proof}

\else

\fi
\section{Practical Implementations}\label{sec:practical}

While the algorithm presented in Section~\ref{sec:alg} (Algorithm~\ref{alg:ndh}) is efficient in theory, we present a number of optimizations that improve its practical performance. Algorithm~\ref{alg:ndh} requires two passes over the $r$-cliques and their $s$-clique-adjacent neighbors, first to compute the $(r, s)$-clique core numbers and then to construct the hierarchy. 
We present algorithms that interleave these two computations, so that only a single pass is required. 
Specifically, we present two algorithms that are not as theoretically efficient as Algorithm~\ref{alg:ndh}, but are faster in practice, particularly when the difference between $r$ and $s$ is large, as we demonstrate in Section~\ref{sec:eval-compare}.

\subsection{Interleaved Hierarchy Framework}
Our algorithms use the same framework, given in Algorithm \ref{alg:ndh-framework}, and the main difference between the two algorithms is the implementation of the key subroutines \ourlink{} and \ourtree{}. 
The framework is based on the peeling process used to compute the $(r, s)$-clique core numbers in Shi \textit{et al.}'s work~\cite{shi2021nucleus}.
The main idea is that when we peel an $r$-clique $R$, while computing the updated $s$-clique counts due to peeling $R$, we are already iterating over all $s$-clique-adjacent $r$-cliques $R'$. Note that additionally, the nucleus decomposition algorithm in Shi \textit{et al.}~\cite{shi2021nucleus} uses a bucketing structure that maintains the intermediate $(r, s)$-clique core numbers of each $r$-clique throughout the peeling process (which begin as simply the $s$-clique count of each $r$-clique, and throughout the peeling process are updated to the actual $(r, s)$-clique core numbers). 
Then, if the intermediate $(r, s)$-clique core number of $R'$ is less than or equal to that of $R$, as maintained in the bucketing structure, the intermediate $(r, s)$-clique core numbers of $R$ and $R'$ 
are actually the final $(r, s)$-clique core numbers of $R$ and $R'$, respectively. Thus, each such $R$ and $R'$ pair are connected in the nucleus decomposition hierarchy up to the level given by $\min(ND[R], ND[R'])$, and after the peeling process completes, we will have all of the relevant connectivity information to completely compute the nucleus decomposition hierarchy. In this sense, it suffices to define a \ourlink{} subroutine to process the $s$-clique-adjacent $r$-cliques given the intermediate $(r, s)$-clique core numbers throughout the peeling algorithm. Based on \ourlink{}, \ourtree{} constructs the final hierarchy tree.

\begin{algorithm}[!t]
  \footnotesize
 \begin{algorithmic}[1]
 \State Initialize $r$, $s$ \Comment{$r$ and $s$ for $(r,s)$ nucleus decomposition}
\Procedure {\ourframework{}}{$G = (V,E)$}
\State $DG \leftarrow$ \algname{Arb-Orient}($G$) \Comment{Apply an arboricity-orientation algorithm}  \label{alg-fw:orient}
\State Initialize $U$ to be a parallel hash table with $r$-cliques as keys, and $s$-clique counts as values \label{alg-fw:init-table}
\State \ourcountrec{}($DG$, $s$, $U$) \Comment{Count $s$-cliques, and store the counts per $r$-clique in $U$} \label{alg-fw:count}
\State Let $ND$ be a bucketing structure mapping each $r$-clique to a bucket based on \# of $s$-cliques \label{alg-fw:init-bucket}
\State $\mathsf{finished} \leftarrow 0$
\While{$\mathsf{finished} < |U|$} \label{alg-fw:begin-peel}
\State $A \leftarrow$ $r$-cliques in the next bucket in $ND$ (to be peeled) \label{alg-fw:set-a}
\State $\mathsf{finished} \leftarrow \mathsf{finished} + |A|$
\ParFor{all $r$-cliques $R$ in $A$} \label{alg-fw:pair-start}
  \ParFor{all $s$-cliques $S$ containing $R$}
    \ParFor{all $r$-cliques $R'$ in $S$ where $R' \neq R$} \label{alg-fw:pair-end}
      \If{$ND[R'] \leq ND[R]$} \label{alg-fw:if}
        \State \ourlink{}$(R', R, ND)$ \label{alg-fw:link}
      \Else
        \ Update $s$-clique count of $R'$ in $U$ \label{alg-fw:update}
      \EndIf
    \EndParFor
  \EndParFor
\EndParFor 
\State Update the buckets of $r$-cliques with updated $s$-clique counts, peeling $A$ \label{alg-fw:update-bucket}
\EndWhile \label{alg-fw:end-peel}
\State \Return \ourtree{}($ND$) \Comment{Return the hierarchy tree $T$, constructed based on \ourlink{}} \label{alg-fw:tree}
  \EndProcedure
 \end{algorithmic}
\caption{Parallel $(r,s)$ nucleus hierarchy framework}
 \label{alg:ndh-framework}
\end{algorithm}

In more detail, \ourframework{} first uses an efficient low out-degree orientation algorithm by Shi \textit{et al.}~\cite{shi2020parallel} to direct the graph $G$ such that every vertex has out-degree at most $O(\alpha)$ (Line~\ref{alg-fw:orient}). Then, it counts the number of $s$-cliques per $r$-clique in $G$ and stores the counts in a parallel hash table $U$, where the keys are $r$-cliques and the values are the counts (Lines~\ref{alg-fw:init-table}--\ref{alg-fw:count}). Note that \ourframework{} uses a subroutine \ourcountrec{} based on previous work by Shi \textit{et al.}~\cite{shi2020parallel}, to count the number of $s$-cliques per $r$-clique. Also, our algorithm initializes a parallel bucketing structure $ND$ that maps $r$-cliques to buckets, initially based on their $s$-clique counts (Line~\ref{alg-fw:init-bucket}). 
We use the bucketing structure by Dhulipala et al.~\cite{DhBlSh17}. This structure $ND$ stores the aforementioned intermediate $(r, s)$-clique core numbers of each $r$-clique, and supports efficient operations to update buckets and return the lowest unpeeled bucket.
Our algorithm then proceeds with a classic peeling paradigm, where until all $r$-cliques have been peeled, it processes the $r$-cliques (that have not yet been peeled) incident to the lowest number of $s$-cliques and peels them from the graph (Lines~\ref{alg-fw:begin-peel}--\ref{alg-fw:end-peel}). For a set $A$ of $r$-cliques with the lowest number of incident $s$-cliques (Line~\ref{alg-fw:set-a}), we iterate over all $s$-clique-adjacent $r$-cliques $R'$ to each $r$-clique $R$ in $A$ (Lines~\ref{alg-fw:pair-start}--\ref{alg-fw:pair-end}). 

Note that if $ND[R'] \leq ND[R]$, this means that $R'$ was either previously peeled or is currently being peeled (and is also in $A$); this is because at any given peeling step, we process the bucket of unpeeled $r$-cliques with the minimum incident $s$-clique count. This also means that $ND[R']$ and $ND[R]$ are the actual $(r,s)$-clique core numbers of $R'$ and $R$ respectively, which follows directly from the correctness of the peeling paradigm~\cite{Sariyuce2017}. Thus, each such $R$ and $R'$ pair are connected in the nucleus decomposition hierarchy up to the level given by $\min(ND[R], ND[R'])$, which we process using the \ourlink{} subroutine (Lines~\ref{alg-fw:if}--\ref{alg-fw:link}). The \ourlink{} subroutine will construct the hierarchy, and we describe it in Sections~\ref{sec:basic-link} and~\ref{sec:efficient-link}.

On the other hand, if $ND[R] > ND[R']$, then this means that $R'$ has not yet been peeled, and the $s$-clique removed by $R$ must be properly accounted for. In this case, \ourframework{} updates the $s$-clique count of $R'$ in the hash table $U$ (Line~\ref{alg-fw:update}). After processing all $R$ and $R'$ pairs, we then update the buckets of the $r$-cliques with updated $s$-clique counts in $U$ (Line~\ref{alg-fw:update-bucket}). We omit the details of these steps for conciseness, since they are described in Shi \textit{et al.}'s parallel nucleus decomposition algorithm~\cite{shi2021nucleus}.

\begin{algorithm}[!t]
  \footnotesize
 \begin{algorithmic}[1]
 \State Initialize $k$ union-find data structures, $uf_i$ for $i \in [k]$, where $k$ is the maximum $(r, s)$-clique core number
\Procedure {\ourilink{}}{$R$, $Q$, $ND$}
\ParFor{$i \in [\min(ND[R], ND[Q])]$} \label{alg-ilink:first}
  \State $uf_i.\mathsf{unite}(R, Q)$ \label{alg-ilink:last}
\EndParFor
  \EndProcedure

\smallskip
\Procedure{\ouritree}{$ND$}
\State Initialize the hierarchy tree $T$ with leaves corresponding to each $r$-clique
\For{$i \in \{k, k-1, \ldots, 1\}$} \label{alg-itree:iterate}
  \ParFor{each connected component $C = \{R_1, \ldots, R_c\}$ in $uf_i$} \label{alg-itree:component}
    \State Construct a new parent in $T$, to be the parent of the roots of the leaf nodes corresponding to each $R_\ell$ (for $\ell \in [c]$) \label{alg-itree:construct}
  \EndParFor
\EndFor
\State \Return $T$
\EndProcedure
 \end{algorithmic}
\caption{Basic link and tree construction }
 \label{alg:ilink}
\end{algorithm}

\subsection{Basic Version of \ourlink{}} \label{sec:basic-link}
The remaining details are in how we perform the \ourlink{} subroutine (Line~\ref{alg-fw:link}) and how we construct the hierarchy tree $T$ with \ourtree{} (Line~\ref{alg-fw:tree}).

In Algorithm \ref{alg:ilink}, we present a basic \ourlink{} subroutine, \ourilink{}, and the corresponding \ourtree{} subroutine, \ouritree{}. 
\ourilink{} maintains a parallel union-find data structure $uf_i$ per core number $i \in [k]$, which corresponds to a level of the hierarchy tree $T$. Each $uf_i$ connects $r$-cliques that are $s$-clique-adjacent considering only $r$-cliques with core numbers $\geq i$. To construct these $uf_i$'s, given two $r$-cliques $R$ and $Q$, \ourilink{} simply unites $R$ and $Q$ in each $uf_i$ where $i \leq \max(ND[R], ND[Q])$ (Lines~\ref{alg-ilink:first}--\ref{alg-ilink:last}).
Then, given the $uf_i$ for all $i \in [k]$, we construct the hierarchy tree $T$ from the bottom up, starting with leaf nodes corresponding to $r$-cliques. 
\ouritree{} begins with $i = k$, where for each connected component in $uf_k$ (Line~\ref{alg-itree:component}), we construct a parent in $T$ where its children are the leaf nodes corresponding to the $r$-cliques in the connected component (Line~\ref{alg-itree:construct}). Then, for $i = k-1, \ldots, 1$ (Line~\ref{alg-itree:iterate}), we construct a new parent for each connected component in $uf_i$, where its children are the parents of the leaf nodes corresponding to the $r$-cliques that compose the component (Line~\ref{alg-itree:construct}). This produces the desired $T$. 

However, \ourilink{} is not efficient, since it requires a union-find data structure per level, and for every pair of $r$-cliques, we could perform up to $k$ \algname{unite} operations. Indeed, in Section~\ref{sec:eval-compare}, we empirically show that \ourilink{} performs many unnecessary \algname{unite} operations in practice. 
If we let $n_r$ and $n_s$ denote the number of $r$-cliques and the number of $s$-cliques in the graph, respectively, \ourilink{} incurs additional space proportional to $\BigO{k n_r}$ and total work upper bounded  by $\BigO{k n_s}$ (since there are at most $\BigO{n_s}$ pairs of $s$-clique-adjacent $r$-cliques).
In the next subsection, we introduce more efficient \ourlink{} and \ourtree{} subroutines.

\begin{algorithm}[!t]
  \footnotesize
 \begin{algorithmic}[1]
 \State Initialize a union-find data structure, $uf$, of length equal to the number of $r$-cliques
 \State Initialize a hash table, $L$, where the keys and values are $r$-cliques
\Procedure {\ourelink{}}{$R$, $Q$, $ND$}
\If{$R$ or $Q$ is empty} \Return \EndIf \label{alg-elink:empty}
\If{$ND[Q] < ND[R]$} Swap $R$ and $Q$ \EndIf \label{alg-elink:check}
\State $R \leftarrow uf.\mathsf{parent}(R)$, $Q \leftarrow uf.\mathsf{parent}(Q)$ \label{alg-elink:set-parent}
\If{$ND[R] = ND[Q]$} \label{alg-elink:begin-if}
  \State $uf.\mathsf{unite}(R, Q)$ \label{alg-elink:if-unite}
  \If{$uf.\mathsf{parent}(R) \neq R$} \ourelink{}$(L[R], uf.\mathsf{parent}(R), ND)$ \EndIf \label{alg-elink:if-1}
  \If{$uf.\mathsf{parent}(Q) \neq Q$} \ourelink{}$(L[Q], uf.\mathsf{parent}(Q), ND)$ \EndIf \label{alg-elink:if-2}
\Else \Comment{$ND[R] < ND[Q]$}\label{alg-elink:begin-else}
  \While{true} \label{alg-elink:while-true}
  \State $LQ \leftarrow L[Q]$
  \State $Q \leftarrow uf.\mathsf{parent}(Q)$ \label{alg-elink:get-parent}
    \If{\cas{}($L[Q]$, empty, $R$)} \label{alg-elink:else-empty}
      \If{$uf.\mathsf{parent}(Q) \neq Q$} \label{alg-elink:empty-parent-start}
        \State \ourelink{}($R, uf.\mathsf{parent}(Q), ND$) 
      \EndIf \label{alg-elink:empty-parent-end}
      \State break
    \ElsIf{$ND[LQ] < ND[R]$} \label{alg-elink:else-check-nd}
      \If{\cas{}($L[Q]$, $LQ$, $R$)} \label{alg-elink:else-nd-cas}
        \If{$uf.\mathsf{parent}(Q) \neq Q$} \label{alg-elink:nd-parent-start}
          \State \ourelink{}($R, uf.\mathsf{parent}(Q), ND$)
        \EndIf \label{alg-elink:nd-parent-end}
        \State \ourelink{}($R, LQ, ND$) \label{alg-elink:nd-r-lq}
        \State break \label{alg-elink:break2}
      \EndIf
    \Else \label{alg-elink:actual-else}
      \State \ourelink{}($R, L[Q], ND$) \label{alg-elink:act-actual-else}
      \State break \label{alg-elink:end}
    \EndIf
  \EndWhile
\EndIf
  \EndProcedure

\smallskip
\Procedure{\ouretree}{$ND$}
\State Initialize the hierarchy tree $T$ with leaves corresponding to each $r$-clique \label{alg-etree:init-tree}
\ParFor{each connected component $\mathcal{C} = \{R_1, \ldots, R_c\}$ in $uf$} \label{alg-etree:uf-begin-parent}
  \State Construct a parent node $uf_\mathcal{C}$ in $T$, where its children are the leaves corresponding to the $r$-cliques in $\mathcal{C}$ \label{alg-etree:uf-parent}
\EndParFor \label{alg-etree:uf-end-parent}
\ParFor{each parent node $uf_\mathcal{C}$ in $T$} \label{alg-etree:uf-begin-link}
  \If{$L[\mathcal{C}]$ is non-empty} \label{alg-etree:if}
    \State $R \leftarrow uf.\mathsf{parent}(L[\mathcal{C}])$ \Comment{Note that $\mathcal{C}$ is the $r$-clique representing the component in $uf$} \label{alg-etree:uf-r}
    \State Make $uf_\mathcal{C}$ a child of $uf_R$ in $T$ \Comment{Note that $uf_R$ necessarily exists since $R$ represents a component in $uf$} \label{alg-etree:uf-link}
  \EndIf
\EndParFor \label{alg-etree:uf-end-link}
\State \Return $T$
\EndProcedure
 \end{algorithmic}
\caption{Efficient link and tree construction }
 \label{alg:elink}
\end{algorithm}

\subsection{Efficient Version of \ourlink{}}  \label{sec:efficient-link}
Our improved subroutines \ourelink{} and \ouretree{} are shown in Algorithm \ref{alg:elink}. 
We refer to an example of $(1, 3)$-nucleus decomposition in the graph in Figure~\ref{fig:graph}. Recall that we have omitted labeling some vertices in the graph for simplicity.

The main idea of \ourelink{} is instead of maintaining $k$ union-find data structures, we maintain a single parallel union-find data structure $uf$ and an additional hash table $L$ that maps $r$-cliques to $r$-cliques. 
First, $uf$ stores connected $r$-cliques considering only other $r$-cliques with equal core numbers. 
For instance, in Figure~\ref{fig:graph}, the vertices $3a$, $3b$, and $3c$ are connected and all have core number 3, so we would store these as a component in $uf$. Note that for all core numbers $i$, we can store this information using a single union-find data structure because the sets of $r$-cliques with distinct core numbers are disjoint. 
We can arbitrarily represent each connected component in $uf$ by a single $r$-clique in that component.

The main idea of $L$ is to connect the components in $uf$ to the ``nearest'' core with a different core number that it is contained within (if it exists). 
For instance, in Figure~\ref{fig:graph}, we note that the component in $uf$ corresponding to $4a$ (consisting of vertices with core number 4) is contained within the 3-core consisting of the component $\{3a, 3b, 3c\}$. In $L$, we would store one of $3a$, $3b$, or $3c$ in an entry corresponding to key $4a$, indicating that this is the ``nearest'' core that $4a$ must join in the hierarchy.
We note that $4a$ is also contained within a larger 2-core and a larger 1-core, but its ``nearest'' core, or the smallest core such that the component $4a$ is a proper subset of that core, is given by the 3-core. 
It is sufficient to store only the ``nearest'' core to $4a$ in $L$, because the component in $uf$ corresponding to $\{3a, 3b, 3c\}$ is responsible for storing its ``nearest'' core in $L$ as well, to the 2-core it is contained within. 
On the other hand, the component corresponding to $4d$ is not contained within the 3-core, and its ``nearest'' core would be the 2-core containing the component $2a$, so $4d$ would store in $L$ the value $2a$.

More formally, for each $r$-clique $R$ representing a connected component in $uf$, let $R'$ be an $r$-clique with the maximum $ND[R']$ such that $ND[R'] < ND[R]$, and $R'$ is connected to $R$ through $s$-cliques considering only $r$-cliques with core number $\geq ND[R']$.
Then, each such $r$-clique $R$ is a key in $L$, and $L$ stores the corresponding $R'$ that satisfies these conditions as the value.
Note that if there are multiple such $R'$ where $ND[R']$ is maximized under these conditions, it is irrelevant which $R'$ is stored in $L$, because it is simple to look up the component that $R'$ corresponds to using $uf$. That is to say, it is irrelevant which of $3a$, $3b$, and $3c$ we store for $4a$ in $L$, because we can look up the parent of $3a$, $3b$, and $3c$ in $uf$, which would resolve to the same parent.
\ourelink{} updates $uf$ and $L$, depending on the core numbers of the given $r$-cliques $R$ and $Q$.

\begin{figure*}[t]
\vspace{-3pt}
   \begin{minipage}{0.5\textwidth}
 \centering
   \includegraphics[width=0.8\textwidth]{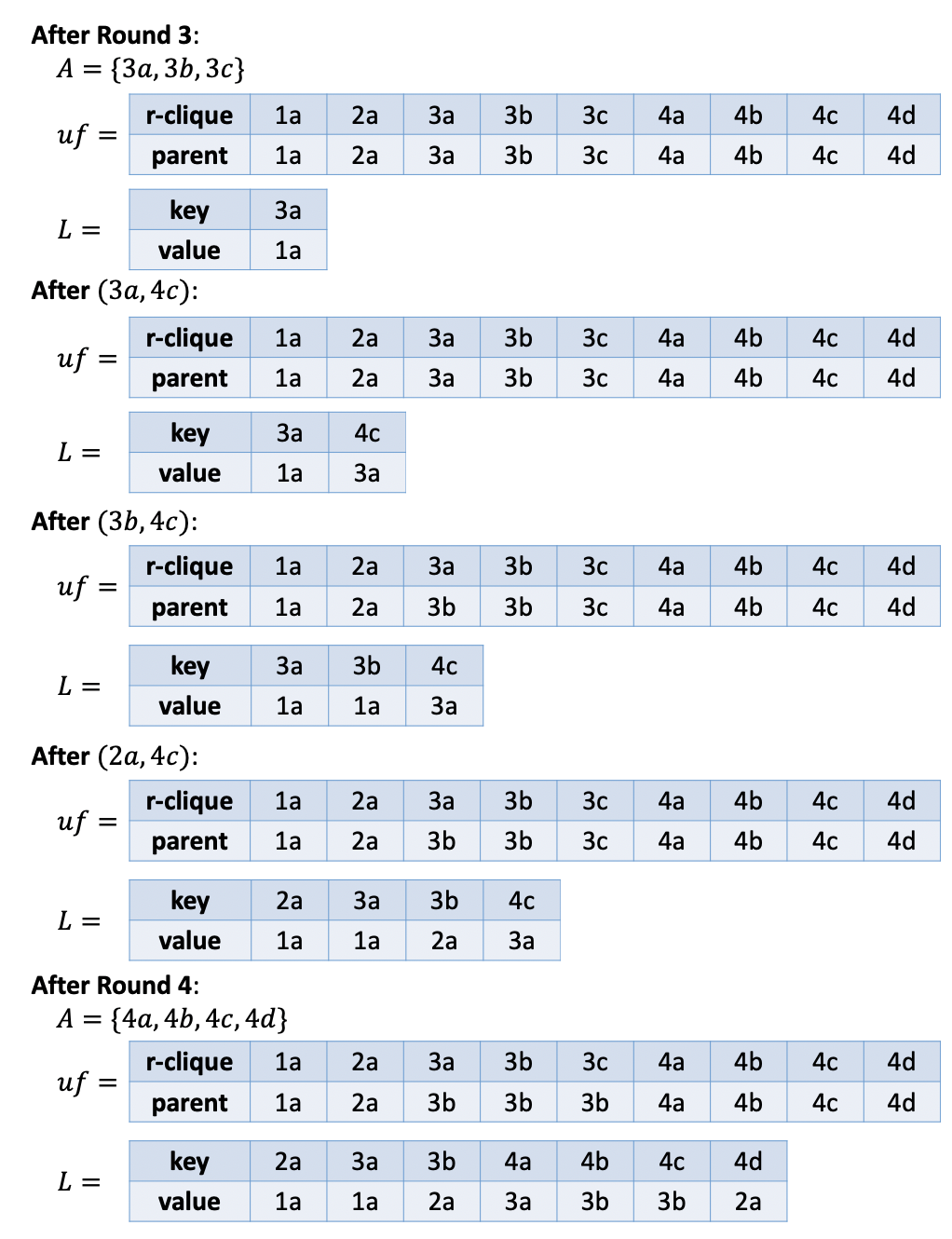}
   \vspace{-5pt}
   \caption{
    An example of the $uf$ and $L$ data structures maintained by \ourelink{} when computing the $(1,3)$-nucleus hierarchy on the graph in Figure~\ref{fig:graph}.
   The data structures are shown after the third and fourth rounds of peeling in \ourframework{}, and after intermediate calls to \ourelink{} within the fourth round. 
   }
    \label{fig:e-link}
   
   \end{minipage}\hfill
   \begin{minipage}{0.47\textwidth}
     \centering

   \includegraphics[width=0.6\textwidth]{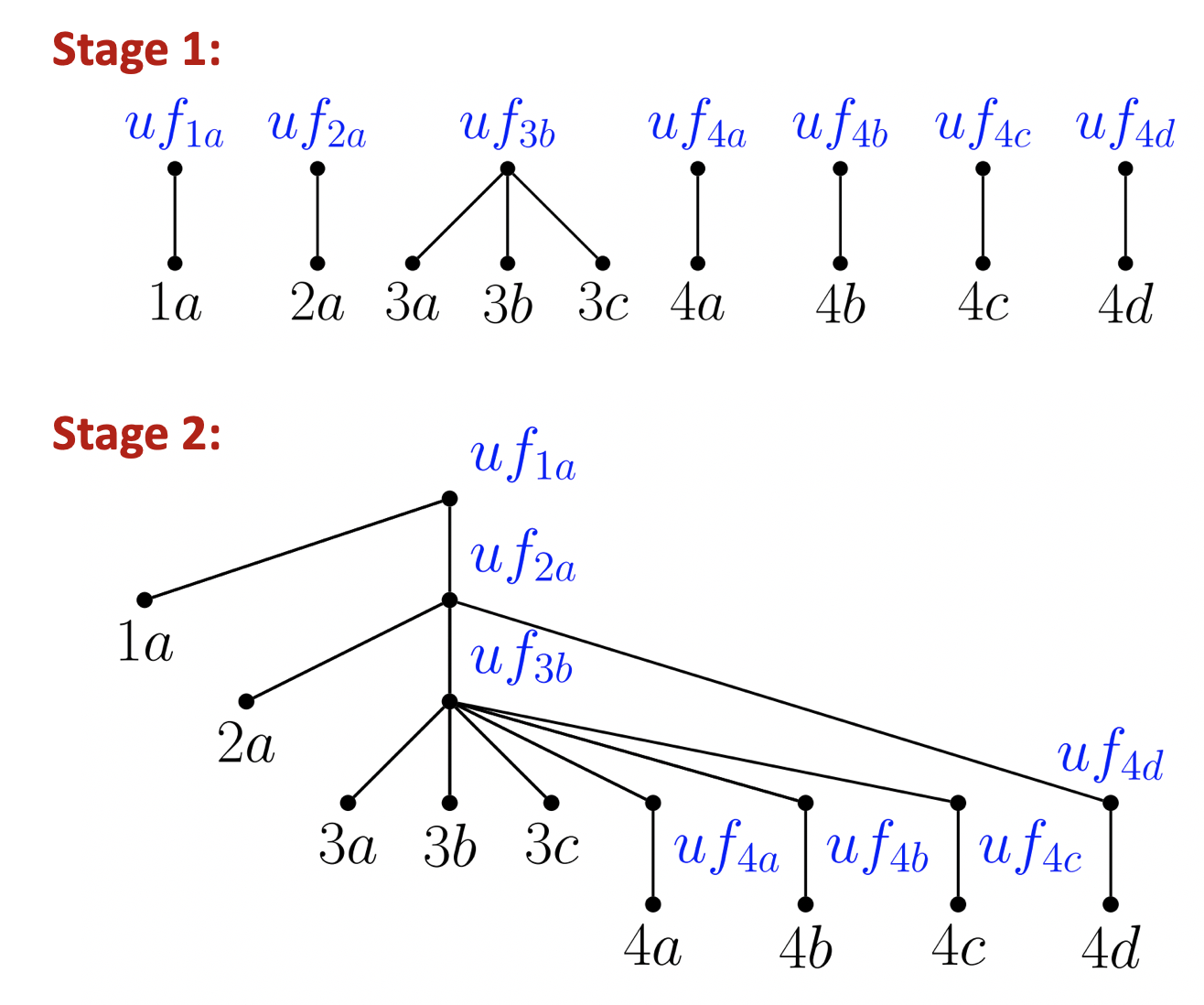}
   \vspace{-5pt}
   \caption{
   An example of the $(1,3)$-nucleus hierarchy tree on the graph in Figure~\ref{fig:graph}, constructed by \ouretree{}. Stage 1 shows an intermediate tree constructed in the \ouretree{} subroutine, and Stage 2 depicts the final hierarchy tree.
   }
    \label{fig:e-link-tree}
   \end{minipage}
\end{figure*}

\myparagraph{Tree Construction} We describe first how \ouretree{} constructs the hierarchy tree $T$ given the specifications for $uf$ and $L$. We refer to the construction shown in Figure~\ref{fig:e-link-tree}, where $uf$ and $L$ are given under ``After Round 4'' in Figure~\ref{fig:e-link}.

The main idea for the construction is that if we begin with a hierarchy tree $T$ consisting of only leaf nodes corresponding to each $r$-clique, the highest (bottom-most) level in which that leaf node may join a non-trivial connected component is the level corresponding to the leaf node's $r$-clique's $(r, s)$-clique core number. That is to say, starting from $i = k$ and iterating to $i = 1$, an $r$-clique $R$ is necessarily a singleton component from $i = k$ to $i = ND[R] + 1$.
The union-find structure $uf$ denotes the parent of each leaf node on the level corresponding to its core number.

Then, once we have the connected components per level corresponding to each core number, it remains to describe how components with different core numbers are contained within each other. This containment is necessarily hierarchical, where components corresponding to larger core numbers are contained within successive components corresponding to smaller core numbers, as can be seen in Figure~\ref{fig:graph}. 
This containment is precisely what $L$ describes; it points each component within a given core $i$ to a component on the greatest core number $j$ such that $j < i$, where the two components are connected on level $j$ (that is to say, connected through $r$-cliques with core numbers $\geq j$). 
Thus, using $L$, we can point each parent constructed using our $uf$ to another parent, which represents the first instance (from the bottom up in $T$) in which the original parent is merged into another component.

In more detail, in \ouretree{}, we begin with a hierarchy tree $T$ consisting of isolated $r$-cliques (Line~\ref{alg-etree:init-tree}).
For each component $\mathcal{C} = \{R_1, \ldots, R_c\}$ in $uf$, representing a connected component of $r$-cliques with the same core number, we construct a parent in $T$, where its children are leaves corresponding to the $r$-cliques in $\mathcal{C}$ (Lines~\ref{alg-etree:uf-begin-parent}--\ref{alg-etree:uf-end-parent}). 
$\mathcal{C}$ is one of these cliques, and is chosen arbitrarily. 
The new parent is $uf_\mathcal{C}$.
This process is shown in Stage 1 in Figure~\ref{fig:e-link-tree}; each vertex has itself as its parent in $uf$, except the vertices $3a$, $3b$, and $3c$, which have $3b$ as their parent. Thus, we create a new parent node $uf_{3b}$ for the leaves $3a$, $3b$, and $3c$.

Then, for each parent node $uf_\mathcal{C}$, we look up the nearest connected component that it should hierarchically connect to using $L$. What this means is that the component $uf_\mathcal{C}$ should join the connected component of $L[\mathcal{C}]$ (if it exists). If $R=uf.\mathsf{parent}(L[\mathcal{C}])$, then the connected component of $L[\mathcal{C}]$ is represented by $uf_R$.
Thus, we make $uf_\mathcal{C}$ a child of $uf_R$ in $T$ (Lines~\ref{alg-etree:uf-begin-link}--\ref{alg-etree:uf-end-link}).
This construction is shown in Stage 2 in our example in Figure~\ref{fig:e-link-tree}. We note that $L[2a] = 1a$ and $uf[1a] = 1a$, so we make $uf_{2a}$ the child of $uf_{1a}$. Similarly, $L[3b] = 2a$ and $uf[2a] = 2a$, so we make $uf_{3b}$ the child of $uf_{2a}$, and $L[4d] = 2a$ and $uf[2a] = 2a$, so we also make $uf_{4d}$ the child of $uf_{2a}$. Finally, we have $L[4a] = 3a$, $L[4b] = 3b$, and $L[4c] = 3b$, where $uf[3a] = uf[3b] = 3b$, so we make $uf_{4a}$, $uf_{4b}$, and $uf_{4c}$ the children of $uf_{3b}$.
In this manner, we construct $T$. 

We make a subtle note that it is not strictly necessary to maintain parent nodes in the hierarchy tree with exactly one child; we can remove all such parents $P$ and make the child $C$ a direct child of its grandparent $G$. 
This is because it is inherently implied that the component that the child $C$ represents in unchanged until it joins the component represented by $G$. 
For instance, $uf_{4d}$'s sole child is $4d$. If we set $4d$ to be a direct child of its grandparent, $uf_{2a}$, and remove $uf_{4d}$, then it is implied that the vertex $4d$ remains its own component until it joins the 2-core containing $2a$ (and notably, $4d$ represents an unchanged component in the 3-core and in the 4-core of the graph).
In this sense, the final hierarchy tree produced in Figure~\ref{fig:e-link-tree} is actually equivalent to that in Figure~\ref{fig:alg1-tree}.

\myparagraph{\ourlink{} Subroutine} 
We now describe \ourelink{}.
The key to \ourelink{} is to properly maintain $uf$ and $L$ according to their definitions given new information obtained by connected $r$-cliques. The main subtlety is that after updating the connectivity information of $R$ or $Q$, in $uf$ or in $L$, there may be cascading effects resulting in new calls to \ourelink{}.

In more detail, \ourelink{} first ensures that $ND[R] \leq ND[Q]$ (Line~\ref{alg-elink:check}). We need only perform operations on the parents of the components of the $r$-cliques in $uf$, so we set $R$ and $Q$ to their respective parents in $uf$ (Line~\ref{alg-elink:set-parent}).

The first case is if $ND[R] = ND[Q]$ (Line~\ref{alg-elink:begin-if}). Then, we need only unite $R$ and $Q$ in $uf$, to maintain that $uf$ tracks the connectivity between $r$-cliques with the same core number (Line~\ref{alg-elink:if-unite}). 
However, the new parent $P$ may now need to update its value in $L$ based on $L[R]$ and $L[Q]$.  
This is because it is possible that $L[P]$ must be updated; 
for instance, if $L[R]$ has a strictly greater core number than the current $L[P]$, $L[P]$ must be updated to be $L[R]$. 
\ourelink{} performs these updates by calling itself on $L[R]$ and $P$, and on $L[Q]$ and $P$, if $P\neq R$ and $P\neq Q$, respectively (Lines~\ref{alg-elink:if-1}--\ref{alg-elink:if-2}).

The second case is if $ND[R] < ND[Q]$ (Line~\ref{alg-elink:begin-else}). There are two main considerations. First, $R$ may replace the current value of $L[Q]$, which we call $LQ$, if $ND[R] > ND[LQ]$. Second, $R$ and $LQ$ must be linked, since they are connected through $Q$ and could affect each other's parent or value in $uf$ or $L$, respectively. We handle these cases with a series of if statements and \cas{}s.

We first perform a \cas{}, checking if $L[Q]$ is empty and replacing it with $R$ if so (Line~\ref{alg-elink:else-empty}); if this \cas{} succeeds, then there is still the possibility that $Q$'s parent in $uf$ changed before the \cas{} completed, in which case $Q$'s new parent is unaware of its connection to $R$. In this scenario, we must call \ourelink{} again on $R$ and the new parent of $Q$, since $R$ could potentially modify the $r$-clique stored in $L$ corresponding to $Q$'s new parent (Lines~\ref{alg-elink:empty-parent-start}--\ref{alg-elink:empty-parent-end}).

If the previous \cas{} failed, then we check if $ND[LQ] < ND[R]$ (Line~\ref{alg-elink:else-check-nd}), which if true, means that $R$ is a candidate to replace $LQ$ in $L$. 
We perform another \cas{} to replace $LQ$ with $R$ (Line~\ref{alg-elink:else-nd-cas}), and if it succeeds, we must again check if $Q$'s parent in $uf$ has potentially changed before the \cas{} completed. Again, if this occurs, we must call \ourelink{} on $R$ and the new parent $Q$ (Lines~\ref{alg-elink:nd-parent-start}--\ref{alg-elink:nd-parent-end}). 
We must also call \ourelink{} on $R$ and $LQ$ (Line~\ref{alg-elink:nd-r-lq}), to store $R$'s connectivity to $LQ$. This is because $R$'s ``nearest'' core as stored in $L$ may be superseded by $LQ$. 
If the \cas{} fails, we simply try again, hence the while loop (Line~\ref{alg-elink:while-true}).

The last case is if $ND[LQ] \geq ND[R]$ (Line~\ref{alg-elink:actual-else}), in which case $R$ is not a candidate to replace $LQ$ in $L$, and we store $R$'s connectivity to $Q$ by calling \ourelink{} on $R$ and $L[Q]$. Note that this is necessary because $R$ and $L[Q]$ could be united in $uf$ if they have the same core number, or $R$ could be a ``nearest'' core to $L[Q]$.

This concludes our efficient \ourlink{} subroutine, \ourelink{}. We show in Section~\ref{sec:eval-compare} that \ourelink{} performs many fewer \algname{unite} and \ourlink{} operations, and achieves significant speedups, over \ourilink{}.
We provide an example of running \ourelink{}.

\myparagraph{Example of \ourelink{}}
As an example of these cascading effects, we refer to an intermediate state of $uf$ and $L$ on the example graph in Figure~\ref{fig:graph}, given after the third round of peeling in \ourframework{} (immediately before peeling the final set of vertices in the graph, given by the components $4a$, $4b$, $4c$, and $4d$). 
This state is shown in the data structures under ``After Round 3'' in Figure~\ref{fig:e-link}. In $uf$, every vertex is its own parent, and in $L$, we have identified that the component corresponding to $3a$ is connected to $1a$. 
Note that many link operations occur from peeling $4a$, $4b$, $4c$, and $4d$, including for the $(R, Q)$ pairs $(3a, 4c)$, $(3b, 4c)$, and $(2a, 4c)$, which we consider in this example. 
We show in Figure~\ref{fig:e-link} the state of $uf$ and $L$ after each of these three calls to \ourelink{} (including the cascading calls that these calls generate).
We list the $R$ and $Q$ for each \ourelink{} operation, as well as the cascading calls that they invoke. We assume the operations happen sequentially for clarity, although in practice they can happen concurrently.

\begin{itemize}[leftmargin=*]
    \item $R = 3a$, $Q = 4c$: We now know that $4c$ is connected to $3a$'s component, and $4c$ does not have a previously set ``nearest'' core, so we set $L[4c] = 3a$ (Line~\ref{alg-elink:else-empty}).
    \item $R = 3b$, $Q = 4c$: We note that $L[4c]$ is now already set to a ``nearest'' core with core number 3, so there is nothing new to update for $L[Q]$. However, we gain from this new link the knowledge that $3a$ and $3b$ are connected (since $L[4c]=3a$), so we must cascade a new \ourelink{} call to $(R,L[Q]) = (3b, 3a)$ (Line~\ref{alg-elink:act-actual-else}), so that $3a$ and $3b$ can be set to the same component in $uf$.
    \begin{itemize}
        \item $R = 3a$, $Q = 3b$: We now call \algname{unite} on $3a$ and $3b$ in $uf$ (Line~\ref{alg-elink:if-unite}). Say that arbitrarily, $3b$ is set as the new parent of $3a$ and $3b$ in $uf$. Since the parents in $uf$ are responsible for maintaining the connection to the ``nearest'' core, we must transfer $L[R] = L[3a]$ to $L[Q] = L[3b]$. We do so by calling \ourelink{} on $(L[R], uf.\mathsf{parent}(R)) = (1a, 3b)$ (Line~\ref{alg-elink:if-1}).
        \begin{itemize}
            \item $R = 1a$, $Q = 3b$: Now, we can set $L[Q] = L[3b]$ to $R = 1a$, since $3b$'s ``nearest'' core is now $1a$ (Line~\ref{alg-elink:else-empty}).
        \end{itemize}
    \end{itemize}
    \item $R = 2a$, $Q = 4c$: Since $4c$ already has an entry in $L$ that's ``nearer'' to it than $2a$, there is nothing new to update for $L[Q]$. However, we now know that $3a$ and $2a$ are connected, so we must cascade a new \ourelink{} call to $(R, L[Q]) = (2a, 3a)$ (Line~\ref{alg-elink:act-actual-else}).
    \begin{itemize}
        \item $R = 2a$, $Q = 3a$: Since the parent of $Q$ in $uf$ is $3b$, we can treat this as $R = 2a$ and $Q = 3b$ (since we only need to maintain connections in $L$ for the parents in $uf$). Now, we find that $3b$ has recorded its ``nearest'' core as $L[3b] = 1a$, but $2a$ is ``nearer''. Thus, we update $L[3b]$ to be $2a$ (Line ~\ref{alg-elink:else-nd-cas}), but now we know that $2a$ is connected to $1a$. So, we call \ourelink{} on $(R, L[Q]) = (2a, 1a)$ (Line~\ref{alg-elink:nd-r-lq}).
        \begin{itemize}
            \item $R = 1a$, $Q = 2a$: We discover that $2a$'s ``nearest'' core is given by $1a$. We set $L[Q] = L[2a]$ to $R = 1a$ (Line~\ref{alg-elink:else-empty}).
        \end{itemize}
    \end{itemize}
\end{itemize}

Note that it is necessary for us to perform these cascading calls to \ourelink{}, because the only way for $3a$ and $3b$ to discover that they should be connected is through one of the 4-core components, and the only way for $3b$ to realize that the component with $2a$ is its ``nearest'' core is also through one of the 4-core components. Similarly, the only way for $2a$ to realize that the component with $1a$ is its ``nearest'' core is through first one of the 4-core components, then through the 3-core component, in which $3a$, which we now know is connected to $3b$, had the original adjacency to $1a$. Thus, information must be constantly propagated through $uf$ and $L$.

\myparagraph{Comparison to Prior Work}
\sariyuce{} and Pinar~\cite{SaPi16} also provide a hierarchy construction algorithm, \algname{nh}, that is performed interleaved with the peeling process. 
They maintain a union-find data structure that stores the connectivity of all $r$-cliques considering only $r$-cliques with the same core number. 
However, for adjacent $r$-cliques with different core numbers, they simply store all pairs of such $r$-cliques in a list. 
They process this list after the peeling process to construct the hierarchy tree, and their method for processing this list requires a global view, since \algname{nh} first sorts the pairs of $r$-cliques in the list based on their core numbers.
Storing this list incurs additional space potentially proportional to the number of $s$-cliques in the graph, which is a significant overhead.

Our main innovation in \ourelink{} is that we need only incur additional space overhead proportional to the number of $r$-cliques in the graph, because we process adjacent $r$-cliques with different core numbers while performing the peeling process. We are able to process this information into a hash table $L$ proportional to the number of $r$-cliques, so our memory overhead overall is $2 n_r$, where $n_r$ is the number of $r$-cliques. In contrast, \algname{nh} uses $\binom{s}{r} \cdot n_s + n_r$ additional space, where $n_s$ is the number of $s$-cliques.
In addition, \algname{nh} is sequential, whereas \ourelink{} is thread-safe and carefully resolves conflicts in updating $uf$ and $L$. Also, the post-processing step to construct the hierarchy tree in \algname{nh} involves many sequential dependencies, where even merges on the same level of the tree may conflict with each other, whereas our post-processing step, \ouretree{}, is fully parallel.

\subsection{Practical Version of \ourndh{}} \label{sec:ndh-practice}
Finally, we make certain optimizations to our theoretically-efficient $(r, s)$ nucleus decomposition hierarchy algorithm, \ourndh{} (Algorithm~\ref{alg:ndh}) to improve its performance in practice. We maintain the two-pass paradigm of first computing the $(r, s)$-clique core numbers of each $r$-clique and then constructing the hierarchy tree $T$. 
However, we do not explicitly store linked lists containing all pairs of $s$-clique-adjacent $r$-cliques, since this represents too much of a memory overhead to be practical, particularly for larger $r$ and $s$. We also do not explicitly generate the graph $H$ given by these linked lists (Line~\ref{alg-ndh:list-rank}).

Instead, we use a single union-find data structure to maintain the connected components (throughout the loop on Lines~\ref{alg-ndh:start-big-loop}--\ref{alg-ndh:end-big-loop}), and for each $i \in \{k, k-1, \ldots, 1\}$ (Line~\ref{alg-ndh:start-big-loop}), we iterate through all $r$-cliques $R$ with core number $i$ and their $s$-clique-adjacent $r$-cliques $R'$. 
We perform a parallel sort on the $r$-cliques based on their core numbers, which allows us to efficiently extract $r$-cliques with the same core numbers; this adds a small additional memory overhead, which we observe in Section~\ref{sec:eval-compare}.
For each such pair of $r$-cliques $R$ and $R'$ where $ND[R'] \geq ND[R]$, we directly unite them in our union-find data structure to obtain the desired connected components; this replicates the same information stored in the linked lists (on Lines~\ref{alg-ndh:start-init-hash}--\ref{alg-ndh:end-init-hash}). 
We construct the requisite new parents in the hierarchy tree $T$ given the computed connected components, and we reuse the same union-find data structure for subsequent $i$.

\section{Evaluation} \label{sec:eval}

\begin{table}[t]
\footnotesize
    \centering
\begin{tabular}{lll}
& $n$         & $m$                  \\ \midrule
\textbf{amazon} &334,863   & 925,872  \\ \hline
\textbf{dblp} &317,080   & 1,049,866  \\ \hline
\textbf{youtube} & 1,134,890  & 2,987,624  \\ \hline
\textbf{skitter} &1,696,415 & 11,095,298  \\ \hline
\textbf{livejournal} &3,997,962 & 34,681,189  \\ \hline
\textbf{orkut} & 3,072,441 & 117,185,083 \\ \hline
\textbf{friendster} & 65,608,366 & $1.806\times 10^9$  \\ \hline
\end{tabular}
\caption{Sizes of our input graphs, which are from SNAP~\cite{SNAP}.}
\vspace{-5pt}
\label{table:graphs}
\end{table}

\myparagraph{Environment and Inputs}
We run our experiments on a Google Cloud Platform instance with a 30-core machine with two-way hyper-threading, with 3.9 GHz Intel Cascade Lake processors and 240 GB of main memory. We use all cores when testing parallel implementations, unless specified otherwise. 
Our implementations are written in C++ and
we compile our code using g++ (version 7.4.0) with the \texttt{-O3} flag.
We use parallel primitives and the work-stealing scheduler from \algname{ParlayLib} by Blelloch \textit{et al.}~\cite{BlAnDh20}. We terminate any experiment that takes over 4 hours.
We test our algorithms on real-world graphs from the Stanford Network Analysis Project (SNAP)~\cite{SNAP}, shown in 
Table~\ref{table:graphs}.

We implement all three versions of our exact $(r, s)$ nucleus decomposition hierarchy algorithms, including our theoretically-efficient \ourndh{} (Algorithm~\ref{alg:ndh}, 
with the optimizations described in Section~\ref{sec:ndh-practice}), 
which we call \ourte{}, and our nucleus decomposition hierarchy framework \ourframework{} (Algorithm~\ref{alg:ndh-framework}) using both \ourilink{} (Algorithm~\ref{alg:ilink}) and \ourelink{} (Algorithm~\ref{alg:elink}), which we call \ourbl{} and \ourel{}, respectively.

We also implement our approximate $(r, s)$ nucleus decomposition hierarchy algorithm, \ourapprox{} (using Algorithm~\ref{alg:ndh-approx}). We integrate \ourapproxnd{} with each of \ourte{}, \ourel{}, and \ourbl{} for the hierarchy construction, giving us three implementations, \ourapproxte{}, \ourapproxel{}, and \ourapproxbl{}, respectively.

We compare our hierarchy algorithms to \algname{nh}, the state-of-the-art sequential $(r, s)$ nucleus decomposition hierarchy implementation by \sariyuce{} and Pinar~\cite{SaPi16} for $(1, 2)$, $(2, 3)$, and $(3, 4)$ nucleus decomposition. Note that \algname{nh} does not generalize to other $r$ and $s$ values. \algname{nh}, like \ourbl{} and \ourel{}, constructs the hierarchy while computing the $(r,s)$-clique-core numbers of the graph. 
For the special case of $k$-core, we also compare to \algname{phcd}, the state-of-the-art parallel $k$-core hierarchy implementation by Chu \textit{et al.}~\cite{Chu2022seqkcore}.

\begin{figure*}[t]
    \centering
    \vspace{-5pt}
   \begin{subfigure}{.35\textwidth}
   \centering 
   \includegraphics[width=\columnwidth, page=5]{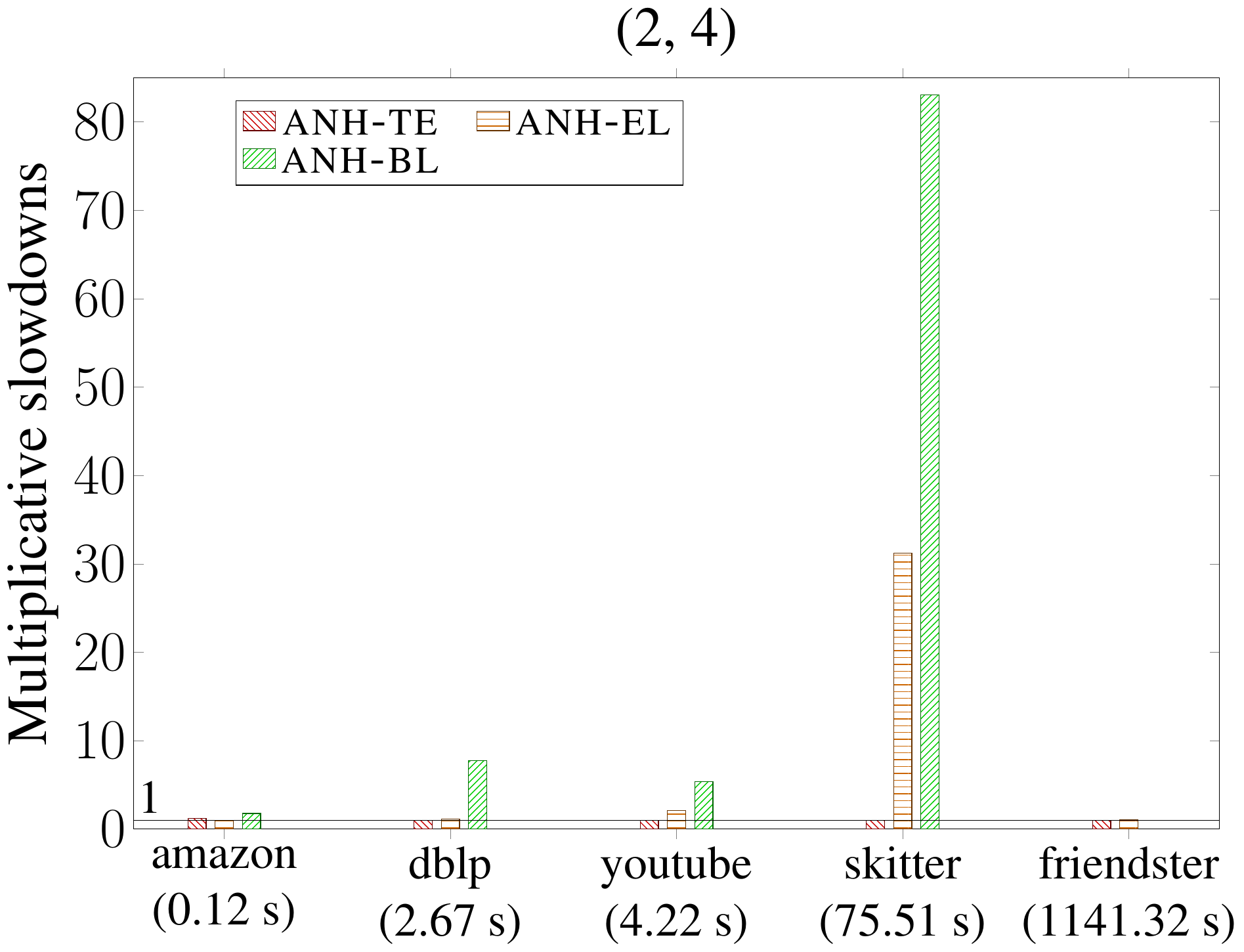}
   \end{subfigure}%
   \hfill
   \begin{subfigure}{.35\textwidth}
   \centering %
   \includegraphics[width=\columnwidth, page=6]{figures/fig_more.pdf}
      \end{subfigure}%
   \hfill
    \begin{subfigure}{.25\textwidth}
   \centering %
   \includegraphics[width=\columnwidth, page=1]{figures/fig_more.pdf}
      \end{subfigure}%

          \begin{subfigure}{.16\textwidth}
   \centering %
   \includegraphics[width=\columnwidth, page=2]{figures/fig_more.pdf}
      \end{subfigure}%
       \hfill
   \begin{subfigure}{.33\textwidth}
   \centering %
   \includegraphics[width=\columnwidth, page=7]{figures/fig_more.pdf}
   \end{subfigure}%
   \hfill
    \begin{subfigure}{.16\textwidth}
   \centering 
   \includegraphics[width=\columnwidth, page=3]{figures/fig_more.pdf}
      \end{subfigure}%
      \hfill
    \begin{subfigure}{.22\textwidth}
   \centering 
   \includegraphics[width=\columnwidth, page=4]{figures/fig_more.pdf}
      \end{subfigure}
   \vspace{-5pt}
   \caption{Multiplicative slowdowns of our parallel nucleus decomposition hierarchy implementations \ourte{}, \ourel{}, and \ourbl{}, over the fastest of the three for each graph, for $r < s \leq 5$. We have omitted bars where our implementations run out of memory or time out after 4 hours, and we have omitted graphs where only one of \ourte{}, \ourel{}, and \ourbl{} completes. Below each graph in parentheses is the fastest running time among the three implementations. We have also included a line marking a multiplicative slowdown of 1. 
   }
    \label{fig:type-1}
\end{figure*}

\subsection{Comparison of \ourte{}, \ourel{}, and \ourbl{}} \label{sec:eval-compare}

Figure~\ref{fig:type-1} compares our exact $(r, s)$ nucleus decomposition hierarchy algorithms, \ourte{}, \ourel{}, and \ourbl{}, against each other, for various $r$ and $s$. We show $r < s \leq 5$ here, but we ran all of our algorithms for $r < s \leq 7$. 
Also, the running times listed in these figures do not include the time needed to compute the low out-degree orientation  or to compute the initial $s$-clique-degrees of each $r$-clique, which are the same across all of our algorithms (note that we do include these times when comparing to other work in Section \ref{sec:exp-exact-perf}). However, our running times do include the time required to compute the $(r, s)$-clique-core numbers of each $r$-clique, which notably for \ourte{} is given by \ournd{} from~\cite{shi2021nucleus}.

Overall, we find that \ourel{} is faster if the difference between $s$ and $r$ is small (generally, if $s - r \leq 2$), and \ourte{} is faster in all other cases. 
The exception is for $k$-core (or (1, 2)-nucleus) decomposition, where \ourte{} is 2.38--21.95x faster than \ourel{}. 
This is because the $k$-core decomposition requires far lower overhead to compute the core numbers per vertex compared to higher $r$ and $s$, since we need only maintain the degree of each vertex. 
The benefit of \ourel{} is due to the improved locality in iterating over and processing $s$-cliques once, rather than recomputing the $s$-cliques twice. This is not a benefit for $k$-core, because iterating over edges is a much simpler and cache-friendly pattern. 
Also, \ourbl{} is significantly slower than both \ourte{} and \ourel{}, and runs out of memory for many values of $r$ and $s$, since it has a much larger memory footprint from storing a union-find structure per core number. 
Overall, \ourel{} is up to 2.37x faster than \ourte{}, and \ourte{} is up to 41.55x faster than \ourel{}, where we see the largest speedups in \ourte{} over \ourel{} when $s$ is much larger than $r$. \ourbl{} is up to 14.55x slower than \ourel{}, and up to 11.96x slower than \ourte{}.

The cases in which \ourel{} outperforms \ourte{} and vice versa, and the reason for the slowness of \ourbl{}, is due to the number of \ourlink{} and \algname{unite} operations. 
Indeed, for the dblp and youtube graphs, particularly for larger $r$ and when the difference between $r$ and $s$ is small, the number of times in which \ourte{} calls \ourlink{} and \algname{unite} is 1.08--13.67x the number of times in which \ourel{} calls \ourlink{} and \algname{unite}.
For smaller $r$ and when the difference between $r$ and $s$ is large, \ourel{} calls \ourlink{} and \algname{unite} between 1.02--18.94x more than \ourte{}.
Looking at fixed $r$ and increasing $s$, we observe that \ourel{} performs many more cascading calls to itself as $s$ increases, since $\binom{s}{r}$ increases and \ourel{} more likely needs to connect two $r$-cliques across multiple levels of the hierarchy tree.
\ourbl{} is much slower than both \ourte{} and \ourel{}, performing up to 39.75x the number of \ourlink{} and \algname{unite} calls. \ourbl{} repeatedly performs \algname{unite} operations equal to the core number of each $r$-clique, which can be redundant, since if two $r$-cliques are connected in a higher core, they are necessarily connected in the lower core.

In terms of memory usage, considering the memory overhead of building and constructing the hierarchy (not including the space required to store the graph or the $(r, s)$-clique-core numbers of each $r$-clique), on dblp and youtube, \ourbl{} uses 1.53--10.03x the amount of overhead that \ourel{} uses, and \ourte{} uses 1.08--1.11x the amount of overhead that \ourel{} uses. We observe that \ourel{} is the most memory efficient overall, since it only maintains two arrays proportional to the number of $r$-cliques ($uf$ and $L$). 
\ourte{} incurs almost the same memory overhead, with a minor additional cost attributed to maintaining $r$-cliques sorted by their core numbers (which we discuss in 
Section~\ref{sec:ndh-practice}),
and \ourbl{} incurs much more overhead to maintain union-find data structures proportional to the number of $r$-cliques per core number.

To provide some context into how long the hierarchy construction takes compared to the coreness computation, 
we measured the time for just computing coreness values and compared it to the total time for each of our algorithms. 
The coreness computation time represents 46.53\%, 35.26\%, and 36.08\% of the total time on average in
\ourel{}, \ourbl{}, and \ourte{}, respectively.
Note that this ignores the locality effects of interleaving the computations in \ourel{} and \ourbl{}. 
Nevertheless, we see that on average, \ourel{} spends the least amount of time on hierarchy construction, indicating the effectiveness of our optimizations. 

\begin{figure*}[t]
\vspace{-3pt}
   \begin{minipage}{0.5\textwidth}
   \centering
   \includegraphics[width=\textwidth, page=3]{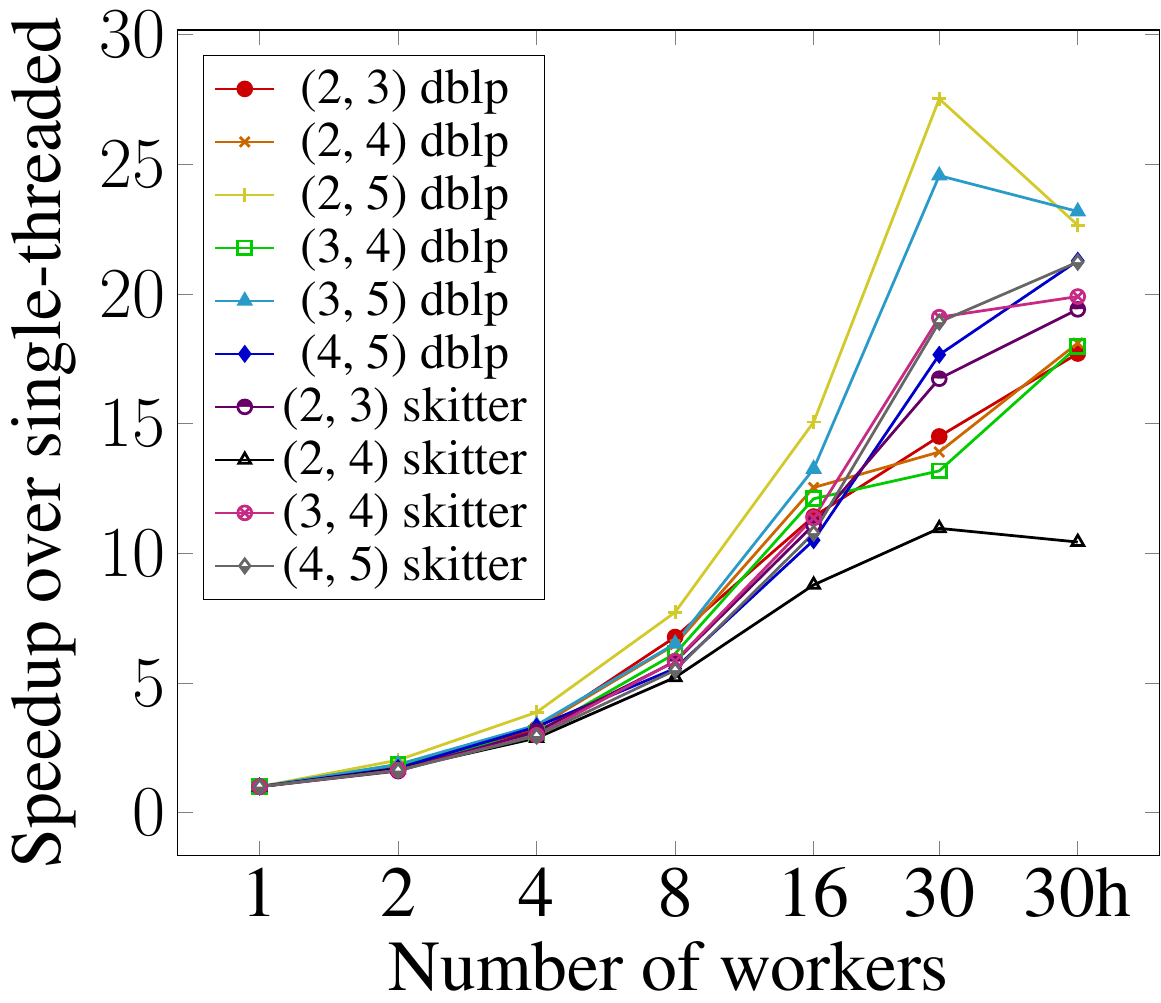}
   \vspace{-20pt}
   \caption{
   Multiplicative slowdowns of parallel \ourndh{} for each $(r,s)$ combination over the fastest running time for parallel \ourndh{} across all $r < s \leq 7$ for each graph, considering the fastest of \ourte{}, \ourel{}, and \ourbl{}. The fastest  time is labeled in parentheses below each graph. We have omitted bars where \ourndh{} runs out of memory or times out after 4 hours.
   }
    \label{fig:all-times}
   \end{minipage}\hfill
   \begin{minipage}{0.47\textwidth}
     \centering
    \centering
   \begin{subfigure}{.5\textwidth}
   \centering
   \includegraphics[width=\columnwidth, page=1]{figures/fig.pdf}
   \end{subfigure}%
   \begin{subfigure}{.5\textwidth}
   \centering
   \includegraphics[width=\columnwidth, page=2]{figures/fig.pdf}
      \end{subfigure}
   \caption{Speedup of \ourte{} on the left and \ourel{} on the right over their respective single-threaded running times, on dblp and skitter for various $r$ and $s$. ``30h'' denotes 30-cores with two-way hyper-threading.
   }
    \label{fig:scalability}
   \end{minipage}
\end{figure*}

\subsection{Performance of Exact Hierarchy} \label{sec:exp-exact-perf}

Figure~\ref{fig:all-times} shows the best running times for all graphs, over $r < s \leq 7$, considering all of our exact $(r, s)$ nucleus decomposition hierarchy algorithms, \ourte{}, \ourel{}, and \ourbl{}, excluding
the time needed to compute our low out-degree orientation and the initial $s$-clique-degrees of each $r$-clique. 
In general, larger $(r,s)$ values correspond to longer running times.
However, some of the times for larger values of $(r,s)$ are faster than for smaller values of $(r,s)$ (especially on amazon) because the maximum coreness values for the larger values of $(r,s)$ are small and the algorithms finish quickly.

Figure~\ref{fig:scalability} shows the scalability of \ourte{} and \ourel{} over different numbers of threads on dblp and skitter, and we see good scalability overall. Across all of our graphs and for $r < s \leq 7$, we observe up to 24.75x self-relative speedups (and a median of 15.57x) for \ourte{} and up to 30.96x self-relative speedups (and a median of 14.13x) for \ourel{}. 
Generally, we observe greater self-relative speedups for larger $r$ and $s$ and for larger graphs.

\begin{figure*}[t]
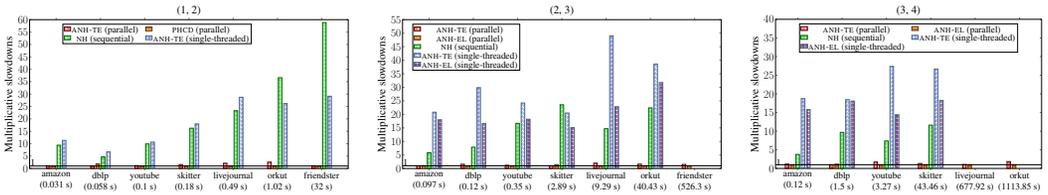

\vspace{-5pt}
    \centering
   \begin{subfigure}{.33\textwidth}
   \centering
   \includegraphics[width=\columnwidth, page=4]{figures/fig.pdf}
   \end{subfigure}%
   \hfill
   \begin{subfigure}{.33\textwidth}
   \centering
   \includegraphics[width=\columnwidth, page=5]{figures/fig.pdf}
      \end{subfigure}%
   \hfill
    \begin{subfigure}{.28\textwidth}
   \centering
   \includegraphics[width=\columnwidth, page=6]{figures/fig.pdf}
      \end{subfigure}
      \vspace{-5pt}
   \caption{ Multiplicative slowdowns, comparing \ourte{}, \ourel{}, the parallel \algname{phcd}~\cite{Chu2022seqkcore}, and the sequential \algname{nh}~\cite{SaPi16}, for $(1, 2)$, $(2, 3)$, and $(3, 4)$ nucleus decomposition. 
   We give the multiplicative slowdown over the fastest implementation for each graph and each $(r, s)$, where the fastest running time is labeled in parentheses below each graph.
   We include end-to-end running times in this comparison, excluding only the time required to load the graph. We have omitted bars where the implementation runs out of memory or times out after 4 hours.
   We have also included a line marking a multiplicative slowdown of 1. 
   }
    \label{fig:comp-prior}
\end{figure*}

\myparagraph{Comparison to Other Implementations}
Figure~\ref{fig:comp-prior} shows the comparison of our parallel $(1, 2)$, $(2, 3)$, and $(3, 4)$ nucleus decomposition hierarchy implementations to other implementations. Note that here, we include in our implementations the time needed to compute the low out-degree orientation and to compute the initial $s$-clique-degrees of each $r$-clique. We do not include the time required to load the graph in both our and other implementations.

For $(1, 2)$ nucleus ($k$-core) decomposition, we compare to the parallel \algname{phcd}~\cite{Chu2022seqkcore} and the sequential \algname{nh}~\cite{SaPi16}. 
Our fastest implementation for $k$-core is \ourte{}, and we see that \ourte{} is up to 2.57x slower than \algname{phcd} overall, but 1.87x faster than \algname{phcd} on dblp. 
We note that \algname{phcd} is optimized for the $k$-core decomposition, rather than general $(r, s)$ nucleus decomposition, whereas \ourte{} generalizes for larger $r$ and $s$.
Like \ourte{}, \algname{phcd} constructs the hierarchy tree from the bottom-up after computing the core numbers of each vertex, but unlike \ourte{}, \algname{phcd} leverages the information from computing the core numbers to optimize for the $k$-core hierarchy, by reordering vertices based on their core numbers. This optimization allows them to more efficiently divide the work of constructing the hierarchy across different threads, and allows them to reduce the work in practice when iterating over the neighbors of a vertex $v$ with larger core numbers than that of $v$.
Compared to the sequential \algname{nh}, \ourte{} is 4.67--58.84x faster, particularly on larger graphs.

For $(1,2)$, $(2, 3)$, and $(3, 4)$ nucleus decomposition, we compare our parallel \ourte{} and \ourel{} to the sequential \algname{nh}~\cite{SaPi16}. Considering the fastest of \ourte{} and \ourel{} for each graph, our implementations are 3.76--23.54x faster, demonstrating that we achieve good speedups from our use of parallelization.
Sequentially, our fastest algorithm is between 2.02x faster and 4.2x slower than \algname{nh}.
We get slowdowns due to the additional overheads of parallel subroutines and the fact that our code has a general structure for supporting arbitrary $(r,s)$ values, whereas the code of~\cite{SaPi16} is specialized for the specific values of $(1,2)$, $(2,3)$, and $(3,4)$. These slowdowns are somewhat counteracted by our faster clique enumeration algorithm, leading to speedups in some cases.

\begin{figure}[t]
    \centering
    \hfill
   \begin{subfigure}{.35\textwidth}
   \centering
   \includegraphics[width=\columnwidth]{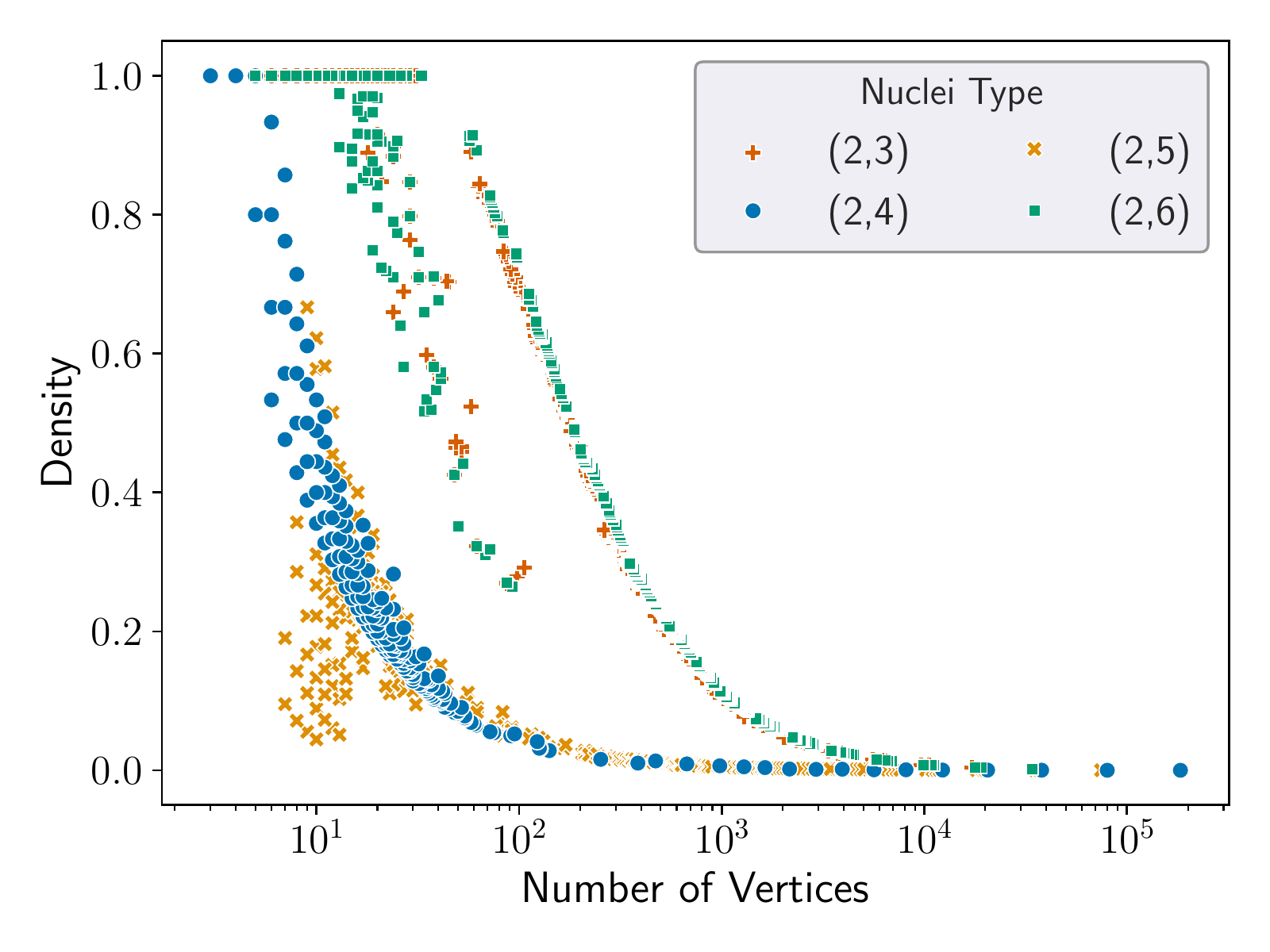}
   \end{subfigure}%
   \hfill
   \begin{subfigure}{.35\textwidth}
   \centering
   \includegraphics[width=\columnwidth]{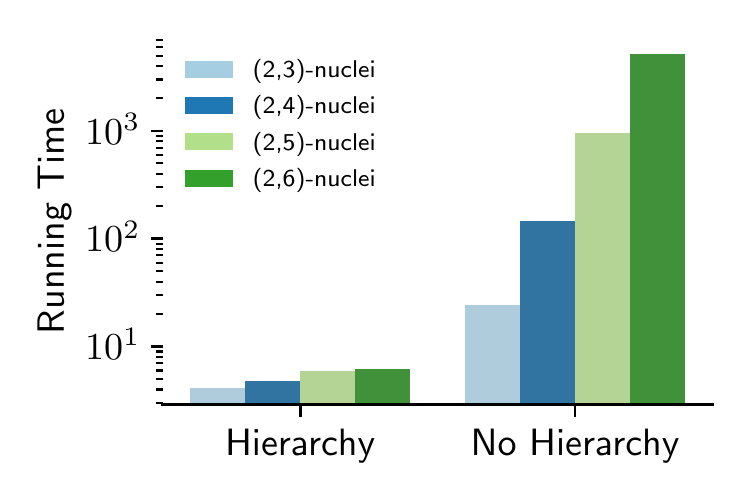}
      \end{subfigure}%
      \hfill
      \vspace{-10pt}
      \caption{(Left) Number of vertices vs.\ edge density of subgraphs from various $(2,s)$-nuclei on the youtube graph. (Right) Running times for finding all $(2,s)$-nuclei with and without the hierarchy.}
\label{fig:density}
\end{figure}

\myparagraph{Usefulness of the Hierarchy}
In Figure~\ref{fig:density},
we show the edge densities of various $(2,s)$-nuclei in the youtube graph. The \emph{edge density} of a subgraph $S$
is defined to be the number of edges in $S$ divided by $|S|\choose 2$ (the total possible number of edges). 
Edge density is the metric used to evaluate the quality of subgraphs from nucleus decompositions~\cite{Sariyuce2017}, and
finding dense subgraphs subject to a size constraint has important applications.

For a fixed $r$ and $s$,
having the nucleus hierarchy enables us to efficiently find all of the
$c$-$(r,s)$ nuclei for any value of $c$ by 
simply removing all internal nodes corresponding to $c'$-$(r,s)$ nuclei for $c'<c$,
and having each resulting subtree be a $c$-$(r,s)$ nucleus that contains all of its leaves.
On the other hand, given only the core numbers and no hierarchy, to find the $c$-$(r,s)$ nuclei for a given value of $c$, one would need to run connected components on all $r$-cliques with core number $\geq c$, with connectivity defined by $s$-clique adjacency, which is much more expensive than simply cutting the hierarchy. 
Indeed, Figure~\ref{fig:density} shows the time needed to generate all $(2,s)$ nuclei for various values of $s$ with and without the hierarchy for the youtube graph. We see that using the hierarchy is 5.84--834.09x faster than not using the hierarchy.

\subsection{Performance of Approximate Hierarchy}
We considered $\delta = 0.1$, $0.5$, and $1$ for
our experiments for approximate $(r, s)$ nucleus decomposition ($\delta$ is the approximation parameter in Algorithm~\ref{alg:ndh-approx}).
We first compare \ourapproxnd{} to \ournd{}~\cite{shi2021nucleus}
and see a speedup of up to 16.16x for $\delta = 0.1$, up to 8.35x for $\delta = 0.5$, and up to 10.88x for $\delta = 1$. 

Besides the computation of the $(r, s)$-clique-core numbers,
\ourapproxte{}, \ourapproxel{}, and \ourapproxbl{} are identical to \ourte{}, \ourel{}, and \ourbl{}, respectively. In other words, their hierarchy construction procedure is the same. We observe up to a 3.3x speedup considering the fastest of our approximate algorithms for each graph and $r < s \leq 7$, over the fastest of our exact algorithms. Notably, for $\delta = 0.1$, we are able to compute the $(2, 5)$ nucleus decomposition hierarchy on friendster in 8783.2 seconds, where our exact implementations timeout at 4 hours. 
The improvements in running time using our approximate algorithms are lower than when comparing \ourapproxnd{} to \ournd{}~\cite{shi2021nucleus} because
even in our approximate algorithms, $s$-cliques must be exactly counted per $r$-clique, and much of the time is spent 
doing this.

In terms of accuracy, the average error in the $(r, s)$-clique-core number per $r$-clique is relatively low for all of our $\delta$ values. For $\delta = 0.1$, across all of our graphs and for $r < s \leq 7$, our coreness estimates per $r$-clique are have a multiplicative error of 1--2.92x on average (with a median of 1.33x) compared to the exact coreness numbers.
For $\delta = 0.5$, the coreness estimates range from having a multiplicative error of 1--2.92x on average as well, with a median of 1.34x, and for $\delta = 1$, the coreness estimates range from having a multiplicative error of 1--3.05x on average, with a median of 1.35x.
The multiplicative errors of the maximum $(r, s)$-clique-core number, across all graphs and for $r < s \leq 7$, are also reasonably low, with a median of 1.6x for $\delta = 0.1$, a median of 2x for $\delta = 0.5$, and a median of 2x for $\delta = 1$. The maximum multiplicative error for a given $r$-clique is 6.73x for $\delta = 0.1$, 6.98x for $\delta = 0.5$, and 7.32x for $\delta = 1$, but these arise for large $s$, notably when $r = 5$ and $s = 7$ for all $\delta$, and these errors are still much lower than the theoretical guarantee of $(\binom{s}{r} + \delta) \cdot (1 + \delta)$ given by Theorem~\ref{thm:approx}.
Note that the theoretical bound is a worst case bound.
In practice, $(r,s)$-cliques whose core numbers 
are close to the end of the range of each bucket will have a much better approximation. 
Furthermore, graphs with a low peeling complexity will likely have more $(r,s)$-cliques peeled before moving to the next round (Line~\ref{alg-approxnd:bound-rounds} of Algorithm~\ref{alg:ndh-approx}), which will make the approximation better.

\section{Conclusion}\label{sec:conclusion}
We have presented new parallel exact and approximate algorithms for nucleus hierarchy construction with strong theoretical guarantees. We have developed optimized implementations of our algorithms, which interleave the coreness number computation with the hierarchy construction. Our experiments showed that our implementations outperform state-of-the-art implementations while achieving good parallel scalability.

\myparagraph{Acknowledgements}
This research was supported by NSF Graduate Research Fellowship
\#1122374, 
DOE Early Career Award \#DE-SC0018947,
NSF CAREER Award \#CCF-1845763, Google Faculty Research Award, Google Research Scholar Award, 
cloud computing credits from Google-MIT, FinTech@CSAIL Initiative, DARPA
SDH Award \#HR0011-18-3-0007, and Applications Driving Architectures
(ADA) Research Center, a JUMP Center co-sponsored by the Semiconductor Research Corporation (SRC) and DARPA.

\bibliographystyle{ACM-Reference-Format}
\bibliography{references}

\end{document}
\endinput